\newtheorem{theorem}{Theorem}[section]
\newtheorem{lemma}[theorem]{Lemma}
\newtheorem{proposition}[theorem]{Proposition}
\newtheorem{definition}[theorem]{Definition}
\title{Cooperative Guarding in Polygons with Holes}
\providecommand{\keywords}[1]
{
  \textbf{Keywords:} #1
}
\date{}
\title{Guarding Polygons with Holes}
\author{John Augustine$^{1}$, Srikkanth Ramachandran$^{1}$  \\
        \small $^{1}$Indian Institute of Technology Madras \\
}
\begin{document}

\maketitle

\begin{abstract}
    We study the Cooperative Guarding problem for polygons with holes in a mobile multi-agents setting. Given a set of agents, initially deployed at a point in a polygon with $n$ vertices and $h$ holes, we require the agents to collaboratively explore and position themselves in such a way that every point in the polygon is visible to at least one agent and that the set of agents are visibly connected. We study the problem under two models of computation, one in which the agents can compute exact distances and angles between two points in its visibility, and one in which agents can only compare distances and angles.
    In the stronger model, we provide a deterministic $O(n)$ round algorithm to compute such a cooperative guard set while not requiring more than $\frac{n + h}{2}$ agents and $O(\log n)$ bits of persistent memory per agent. In the weaker model, we provide an $O(n^4)$ round algorithm, that does not require more than $\frac{n+2h}{2}$ agents.
\end{abstract}

\keywords{Mobile Agents, Art Gallery Problem, Cooperative Guarding}


\section{Introduction}

The {\it Art Gallery Problem} is a classical computational geometry problem which asks for the minimum number of guards required to completely guard the interior of a given art gallery. 
The art gallery is modelled as a simple polygon and guards are points on or inside the polygon. 
A set of guards is said to guard the art gallery, if every point in the art gallery is visible to at least one guard. 
This problem was first posed by Klee in 1973 and since then the problem has been of interest to researchers. 
The problem and its many variations have been well-studied over the years. Chv\'{a}tal \cite{CHVATAL197539} was the first to show that $\lfloor \frac{n}{3} \rfloor$ guards are sufficient and sometimes necessary to guard a polygon with $n$ vertices. 
Fisk \cite{FISK1978374} proved the same result via an elegant coloring argument, which also leads to an $O(n)$ algorithm that computes such a guard set. 
The problem of finding the minimum number of guards was shown to be NP-hard (Lee and Lin \cite{1057165}) as well as APX-hard(Eidenbenz {\it et al.}\cite{Eidenbenz2001}). 
Consequently, researchers have focussed on approximation algorithms (Ghosh~\cite{G87}, Deshpande {\it et al.} \cite{deshpande2007pseudopolynomial},  and Bhattacharya {\it et al.} ~\cite{BGP17}).


In this paper, we study a variant of the classical Art Gallery Problem known as the {\em Cooperative Guards} problem in polygons with holes from a distributed multi-agents perspective. 
The {\em Cooperative Guards} problem is similar to the {\it Art Gallery Problem} with the additional constraint that the visibility graph of the guards, i.e., the graph with guards as vertices and edges between guards that can see each other, should form a single connected component. 
This implies that if the guards can communicate through line of sight, then any two guards can communicate with each other (either directly or indirectly through intermediate guards). 
The {\it Cooperative Guards} problem was first introduced by Liaw {\it et al.} in 1993 ~\cite{Liaw1993TheMC}, in which they show that this variant of the {\it Art Gallery Problem} is also NP-hard.

The pursuit of solving the {\it Cooperative Guards} problem for us is primarily motivated by the search for distributed multi-agent exploration algorithms for agents deployed in the polygon. 
One of the easiest ways to explore the polygon is by maintaining connectivity through line of sight and having the agents cooperatively send exploratory agents and expand the area of the polygon that they can collectively see. Such an algorithm would require the agents to be visibly connected and if at the end they are collectively able to see the entire polygon, then they form a solution to the {\it Cooperative Guards} problem. 
This lead us to first tackle the {\it Cooperative Guards} problem in the centralized setting and adapt it to the distributed multi-agent model. 
Note that once the agents are positioned to cooperatively guard the polygon, they can collectively solve \textit{any} computational geometry problem on the polygon by executing a distributed algorithm on their visibility graph. 
Hence our algorithm can also serve as an initial pre-processing step for distributed multi-agents to solve other problems on the polygon, for example they can compute the number of vertices of the polygon, diameter of the polygon etc. 
Our focus throughout this paper is to find time optimal algorithms for computing a set of cooperative guards on the polygon (not necessarily minimum). 
Analogous to Fisk's \cite{FISK1978374} coloring argument for the classical {\it Art Gallery Problem}, we provide an efficient algorithm that computes a set of visibly connected guards using no more than $\frac{n + h - 2}{2}$ guards. 

\subsection{Our Contributions}

Our goal is to design efficient deterministic algorithms for placing  cooperative guards in polygons with holes. While we warmup with a centralized algorithm, our main focus is on constructing distributed algorithms executed by mobile agents.

In Section~\ref{warmup} we first present a centralized algorithm for the Cooperative Guards problem in polygonal region with $n$ vertices and $h$ holes, that does not require more than $\frac{n+2h-2}{2}$ guards.
The centralized algorithm presented here runs in time $O(n+T(n, h))$ where $T(n, h)$ is the time required to triangulate a polygon with $n$ vertices and $h$ holes, i.e. after the initial step of triangulating the polygon, the algorithm takes linear time and places no more than $\frac{n + 2h - 2}{2}$ guards. 
We then use an observation from Zylinski \cite{zylinski2005cooperative} to reduce the number of guards to $\frac{n+h-2}{2}$. The reduction is slower and runs in $O(n^2)$ time.

In Section~\ref{model} we propose two distributed models of computation, one in which the agents can perceive exact distances (which we call {\em depth perception}) and hence can compute co-ordinates (with respect to some common reference frame) to map out the polygon, and another in which the agents receive only a combinatorial view of their visibility.  

In Section~\ref{sec:algorithm} we present a distributed algorithm (in the stronger {\em depth perception} model) that runs in $O(n)$ rounds and does not require more than $\lfloor \frac{n + h - 2}{2} \rfloor$ agents. Our distributed algorithm emulates the centralized algorithms presented in Section~\ref{warmup} except that the polygon is not known in advance. We simultaneously explore and incrementally construct a  triangulation. 

In Section~\ref{sec:algorithm:proximity} we present a distributed algorithm for the weaker {\em proximity perception } model, which takes $O(n^4)$ rounds, and requires $\frac{n+2h-2}{2}$ agents. 

A summary of the results is presented in Table \ref{tab:results}.



\begin{table}[ht]
    \centering
    \begin{tabular}{|c|c|c|c|c|}
        \hline
        Model \& Algorithm & Rounds & Broadcasts & Persistent & Guards  \\
         &  &  & Memory &   \\
        \hline
        Depth Perception & $O(n)$ & $O(n)$ & $O(n \log n)$ & $(n+h-2)/2$\\
        (Large memory Thm \ref{thm:dist:alg1}) & & & & \\
        \hline
        Depth Perception & $O(n)$ & $O(n \log n)$ & $O(\log n)$ & $(n+2h-2)/2$\\
        (Small memory Thm \ref{thm:dist:alg3}) & & & & \\
        \hline
        Depth Perception & $O(n)$ & $O(n (h + \log n))$& $O(\log n)$ & $(n+h-2)/2$\\
        (Improving guards Thm \ref{thm:dist:alg4}) & & & & \\
        \hline 
        Proximity Perception (Thm \ref{thm:dist:proximity}) & $O(n^4)$ & $O(n^4)$& $O(\log n)$ & $(n+2h-2)/2$\\ 
        \hline
    \end{tabular}
    \caption{A summary of our results with respective theorem numbers indicated in the first column.}
    \label{tab:results}
\end{table}


\subsection{Related Work}
Most of the algorithms developed for solving the {\it Art Gallery Problem} and its variants triangulate the polygon in the first step. 
So we first look at the best known algorithms for triangulation.
Chazelle \cite{Chazelle1991} provided a linear time algorithm for computing the triangulation of a simple polygon with $n$ vertices, however the algorithm is known to be quite complicated. 
Simpler $O(n \log n)$ algorithms were established long before by Gary {\it et al.} \cite{garey1978triangulating}. For polygons with holes the algorithm by Bar-Yehuda and Chazelle \cite{BARYEHUDA1994} can be used to obtain an $O(n + h \log^{1 + \epsilon} h)$ time algorithm. 
For polygons with holes the triangulation problem has an $\Omega(n \log n)$ lower bound (as shown by Asano {\it et al.}~\cite{ASANO1986221}). 
Our algorithm involves triangulation as the first (and most time consuming) step and thus the run time of all algorithms is at best $O(n \log n)$ and slightly better when $h = o(n)$ if Bar-Yehuda and Chazelle's\cite{BARYEHUDA1994} algorithm is used. 

Chv\'{a}tal \cite{CHVATAL197539} and Fisk \cite{FISK1978374} independently proved that $\lfloor \frac{n}{3} \rfloor$ guards are always sufficient and occasionally necessary to guard a simple polygon with $n$ vertices, with the latter providing an elegant constructive proof which can construct the desired set of guards in linear time.

The classical {\it Art Gallery Problem} for polygons with holes was first solved by Souvaine {\it et al.} \cite{BjorlingSachs1995}. 
They show that $\frac{n + h}{3}$ guards are sufficient (by presenting an $O(n^2)$ time algorithm) and sometimes necessary to guard a polygonal region with $n$ vertices and $h$ holes. 

The {\it cooperative guards} problem was first posed by Liaw \cite{Liaw1993TheMC} where they provide a solution for $k-$spiral polygons and also show that the problem is NP-hard. 
Ahlfeld and Hecker \cite{AhlfeldUweandHecker92} were the first to prove that for simple polygons $\lfloor \frac{n-2}{2} \rfloor$ guards are sufficient and sometimes necessary. 
Their algorithm relies on analysing the length of the longest degree $2$ chains in the dual graph. The dual graph is a graph corresponding to a triangulation of the polygon. 
The vertices of the polygon correspond to a triangle in the triangulation and an edge is present between two vertices if their corresponding triangles share a common edge. 
Hernandez-Penalver \cite{hernandez1994controlling} proved the same result by induction on the dual graph.
Pinciu \cite{Pinciu} also provided an alternate proof of the same result by extending Fisk's coloring argument. Zylinski \cite{zylinski2007cooperative} also provided an alternate soluton based on the diagonal graph. 
The diagonal graph is the graph induced by the diagonals of the triangulation. 
A vertex cover of the diagonal graph gives a cooperative guard set with no more than $\lfloor \frac{n}{2} \rfloor$ guards. 

For polygons with holes, Ahlfeld and Hecker \cite{AhlfeldUweandHecker92} were the first to show that $\lfloor \frac{n + 2h - 2}{2} \rfloor$ guards are sufficient, however the bound is not tight. 
Zylinski \cite{zylinski2005cooperative} showed that for polygons with one hole, $\lfloor \frac{n-1}{2} \rfloor$ guards are sufficient and sometimes necessary.
Zylinski \cite{zylinski2007cooperative} provided a better upper bound of $\frac{n + h - 1}{2}$ guards using ideas established by Sachs and Souvaine \cite{BjorlingSachs1995} that leads to an $O(n^2)$ algorithm. 
Zylinksi \cite{zylinski2005cooperative} also provides a different linear time algorithm that takes no more than $\lfloor \frac{n + 2h - 2}{2} \rfloor$ guards.
Zylinksi's book \cite{zylinski2007cooperative} contains a more extensive survey of these algorithms.

The {\it Cooperative Guards} problem was first studied in the mobile-agents setting by Obermeyer {\it et al.} \cite{OGB11}. 
Ashok {\it et al.}\cite{Bharat2020} provides a better $O(\min(D^2, n))$ time algorithm where $D$ is the maximum hop distance between any two points in the polygon, however it does not extend to polygons with holes and requires upto $n$ agents for exploration.

In the mobile agents setting, several models have been proposed and studied over the years. A survey of such models is available in~\cite{DCME_book}. In most of them, the agents operate either on a graph or on the 2D-plane without obstacles or bounding structures. 

In all the models studied in this paper, the movement of the agents is on a region of the 2D-plane that is bounded by a polygon (possibly with holes) that is not known in advance.  Consequently we need to learn and contend with the geometric complexities of the polygon to solve problems in our model(s).  There have been a few papers on mobile agents operating within bounded simple polygons (Ashok {\em et al.}~\cite{Bharat2020}). To the best of our knowledge, the only other work on mobile agents operating on a polygon with holes is by Obermeyer {\em et al.}~\cite{OGB11}. They provide an algorithm that takes no requires no more than $\lfloor \frac{n + 2h - 1}{2} \rfloor$ agents to collectively explore the polygon and deploy the agents so that they cooperatively guard the polygon. The running time of their algorithm is $O(n^2)$ rounds in the worst case.


\section{Preliminaries}\label{warmup}

\subsection{Polygons and Visibility}

\begin{figure}
\includegraphics[width=\textwidth]{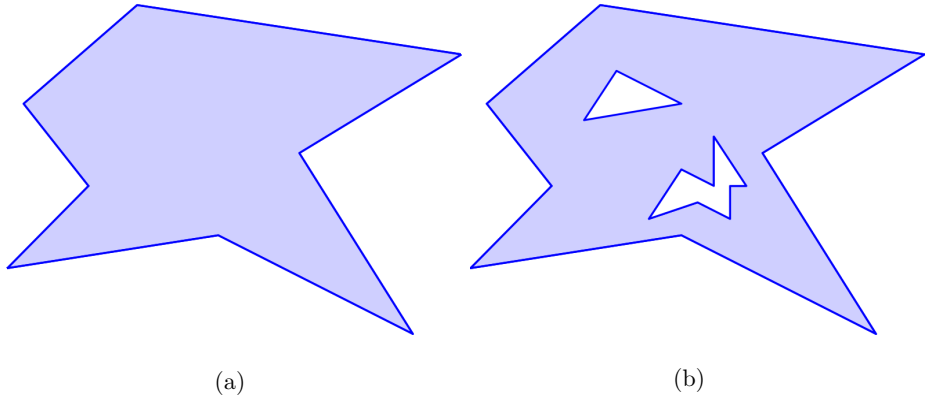}
\caption{(a) A simple polygon with 8 vertices, (b) A polygon with 19 vertices and 2 holes} \label{fig:poly_ex}
\end{figure}

\begin{definition}\label{def:polygon_simple}
Simple Polygon. A simple polygon is described by its boundary $B$ which can be represented by a sequence of alternating vertices and sides $v_1, e_1, v_2, e_2, \dots v_{n-1}, e_{n-1}, v_n, e_n, v_1$ such that (i) the side $e_i$ is the line segment joining $v_{i}$ and $v_{i \mod n + 1}$, (ii) all vertices $v_1, v_2, \dots v_n$ are distinct and (iii) no two distinct sides properly intersect. The boundary $B$ forms a closed loop in the plane and the polygon $P$, is the set of points that are either on or in the interior of $B$. The exterior of the polygon $P$, is the set of points not belonging to $P$.
\end{definition}

\begin{definition}\label{def:polygon_holes}
Polygon with holes. A polygon $P$ with $n$ vertices and $h$ holes is described by a simple polygon $D$ and a set of $h$ mutually disjoint simple polygons (called holes) all lying in the interior of $D$. The polygon $P$ is the set of points that do not lie in the exterior of $D$ and those that do not lie on the interior of the $h$ holes. The total number of vertices of $P$, given by $n$, is the sum of the number of vertices of all the simple polygons (including $D$ and the $h$ mutually disjoint polygons) describing $P$.
\end{definition}
By definition all simple polygons have at least $3$ vertices and for all polygons with $h$ holes, $n \geq 3h + 3$ (the boundary and all holes must have $3$ vertices each). 
\begin{definition}\label{def:visbility}
Visibility of points. Two points $p, q \in P$ are said to be visible to each other in a polygon $P$, if the line segment joining them does not intersect the exterior of $P$. 
\end{definition}
Note that we allow points to lie on the line segment. In particular point objects in $P$ do not obstruct visibility. 
\begin{definition}\label{def:visbility_polygon}
Visibility Polygon. For any point $p \in P$, the set of points $q$ such that $p, q$ are visible in $P$ is the visibility polygon of $p$ and is denoted by $VP(p)$.
\end{definition}

\begin{definition}\label{def:visbility_vertices}
Vertex-limited Visibility Polygon. The vertex limited visibility polygon of a point $p$ in a polygon $P$ is a sub-polygon of $VP(p)$, with maximum area, all of whose vertices are vertices of $P$. 
\end{definition}

\begin{figure}
\includegraphics[width=\textwidth]{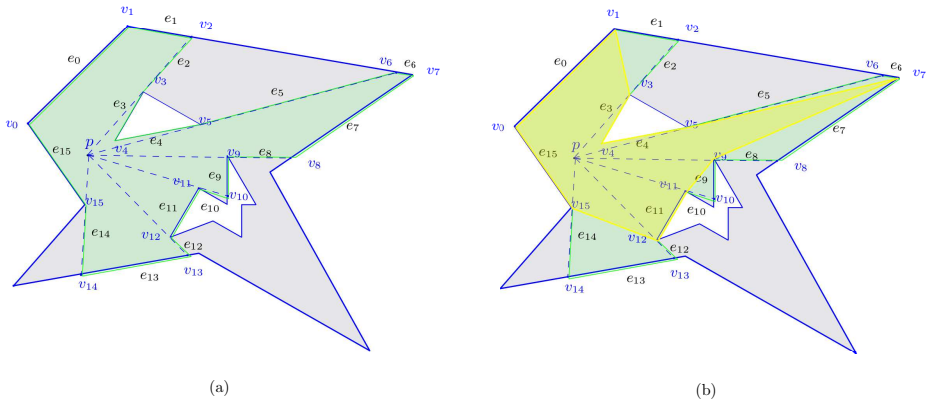}
\caption{(a) Visibility polygon of a point p (shaded green), (b) Vertex limited visibility polygon (shaded yellow)} \label{fig:visibility_ex}
\end{figure}
The vertex limited visibility polygon can be obtained by choosing any vertex, $v$ in $VP(p)$, that does not belong to $P$, removing $v$ and adding a side between the vertices adjacent to $v$ in the boundary of $VP(p)$. If the procedure is repeated until all vertices of the sub-polygon obtained is a vertex of $P$, we obtain the vertex limited visibility polygon of $p$.
Figure \ref{fig:visibility_ex} (a) contains an example of a visibility polygon of a point $p$ and the corresponding vertex limited visibility polygon.

\begin{definition}\label{def:crop}
Crop(VP, s, p). We define an operation $crop$ that takes as input a visibility polygon $VP$, a diagonal $s$ of the visibility polygon and a point $p$ that is on or in the interior of the $VP$, but NOT on the side $s$. The polygon $crop(VP, s, p)$ is the portion of $VP$ that lies on or in the opposite of the diagonal $s$ that contains $p$. The operation is useful when we look at exploration algorithms on the polygon. Intuitively the operation $crop(VP, s, p)$ denotes some region of the polygon that is yet to be explored. An illustration is shown in Figure \ref{fig:look_cycle} (b).
\end{definition}

\begin{definition}\label{def:triangulation}
Triangulation and Dual graph. For a polygon $P$ with $n$ vertices and $h$ holes, a triangulation is a partition of the polygon $P$ into triangles such that (i) all vertices of all triangles are vertices of $P$ and (ii) the interiors of the triangles are mutually disjoint. The corresponding dual graph is a simple graph whose nodes represent the triangles of the triangulation and there exists an edge between two nodes in the graph iff their corresponding triangles share an edge.
\end{definition}

It is known that for any polygon $P$ with $n$ vertices and $h$ holes, and any valid triangulation of $P$, the total number of nodes in the dual graph is $n + 2h - 2$ and the total number of edges is $n + 3h - 3$. This can be proved using induction. Note that the dual graph is a sparse (and more specifically also a planar graph) with only $h-1 = O(n)$ extra edges. Additionally when $h = 0$, the dual graph is a tree. A simple method (although not computationally efficient) to construct a valid triangulation is to construct any maximal set of non-intersecting diagonals of the polygon. These diagonals create the desired triangulation. Note that any diagonal splits the polygon into two sub-polygons and we can recursively construct the triangulation for each piece. This argument can be used to build an inductive proof of the statements made above and we will use these simple concepts to build distributed algorithms. 


In this section we first present a simple centralized algorithm that given a polygon $P$ with $n(\geq 4)$ vertices and $h$ holes, computes a cooperative guard set with no more than $\frac{n + 2h - 2}{2}$  guards. 
We then show how to update the co-operative guard set, when the polygon is slightly modified. The update method is more efficient than computing the guard set from scratch and will be emulated by the distributed algorithms in Section \ref{sec:algorithm:proximity}. We will then briefly discuss reducing the number of guards to $\frac{n+h-2}{2}$ using a technique given by Sachs and Souvaine \cite{BjorlingSachs1995}. 

\subsection{A Centralized Algorithm for Cooperative Guarding}
The idea presented here is exactly the same as that presented by Ashok {\it et al.}~\cite{Bharat2020}. We write the same procedure in a depth-first search manner and extend it for polygons with holes. 
In the first step we construct a triangulation, $T$ (See Definition \ref{def:triangulation}), of the polygon and the corresponding dual graph $\mathcal{D}_T$. We call a set of $3$ nodes in the graph $\mathcal{D}_T$ that lie on a path of length $2$ as a ``triplet". From the pigeon-hole principle, all triangles corresponding to each triplet share a common vertex. We place the guards at this common vertex. If two triplets share a common node $c$, then their corresponding common vertices will lie on the triangle $c$ and are therefore visibly connected. Using the above ideas, we can construct a set of $\frac{n+2h-2}{2}$ guards, each of which is placed on a vertex of the polygon $P$. Algorithm \ref{alg:dfs} is a depth first search procedure that computes such a guard set. First construct a spanning tree rooted at any node with degree at most $2$, which is a binary tree. We construct the triplets from the binary tree in a bottom-up fashion (i.e. triplets are formed with nodes with higher depth first). At the bottom-most level we will encounter triplets of two types, as shown in Figure~\ref{fig:triplets}. Once the triplets is placed according to the cases in Figure~\ref{fig:triplets}, we can remove the two nodes at higher depths, and retain only the node with the lowest depth. Repeating the above procedure yields the required set of triplets. In the depth-first search procedure, instead of explicitly deleting the nodes, we mark them as $\mathsf{covered}$. 

\begin{figure}
\begin{center}
\includegraphics[width=0.2\textwidth]{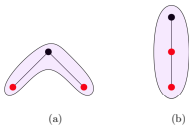}
\end{center}
\caption{(a) Node has two children, (b) Node has one child. Nodes in red are marked $\mathsf{covered}$ after forming the triplets (shaded purple)} \label{fig:triplets}
\end{figure}

\begin{algorithm}[ht] 
    \caption{Centralized algorithm for the cooperative guards problem for polygons with holes.} 
    \label{alg:central} 
    \begin{algorithmic}[1]
        \State Compute a triangulation $T$ of the polygon $P$ and the corresponding weak dual graph $\mathcal{D}_T$
        \State $t \gets $ An array of size $|V(\mathcal{D}_T)|$ each of which can store a set of size 3
        \State $v$ $\gets$ any node in $\mathcal{D}_T$ whose degree is not $3$ 
        \State $\mathcal{G}_T \gets $ Any spanning tree of $\mathcal{D}_T$ rooted at $v$ \Comment{Always a binary tree}
        \State \Comment{The rooted spanning tree also includes the parent $\texttt{p}(u)$ for each node $u$. By convention, for the root $v$, $\texttt{p}(v) = v$}
        \State \Call{DFS}{$v$} \Comment{Perform depth-first search as described in Algorithm \ref{alg:dfs}}
        \ForAll {triplets computed by \Call{DFS}{$v$}} \Comment{Choose the non-empty triplets in the array $t$}
            \State Place a guard at a common vertex of all triangles in the triplet
        \EndFor
    \end{algorithmic}
\end{algorithm}

\begin{algorithm}[ht] 
    \caption{Depth first search procedure for computing a connected set of triplets.}
    \label{alg:dfs} 
    \begin{algorithmic}[1]
        \Procedure {DFS}{$u$} 
            \State Mark $u$ as $\textsf{uncovered}$
            \ForAll {children $v$ of $u$ in $\mathcal{G}_T$} 
                \State \Call{DFS}{$v$}
            \EndFor 
            \State $C \gets $ Set of all children of $u$ that are $\mathsf{uncovered}$
            \State $T \gets $ Set of the three lowest depth nodes in $C \cup \{u, \mathsf{p}(u), \mathsf{p}(\mathsf{p}(u))\}$ \Comment{Refer Figure \ref{fig:triplets}}
            \State Mark the two lowest depth nodes of $T$ as $\mathsf{covered}$
        \EndProcedure
    \end{algorithmic}
\end{algorithm}

\begin{figure}
\includegraphics[width=0.4\textwidth]{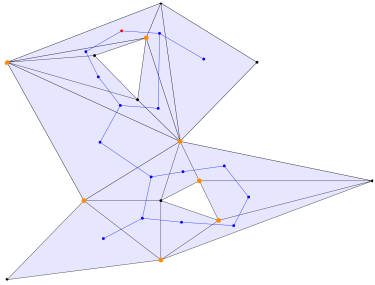}
\caption{An example polygon with $n=14$ vertices and $h=2$ holes. The diagonals form the triangulation $T$. The blue (and red) nodes and edges represent the weak dual graph $\mathcal{D}_T$ and are embedded in the polygon only for clarity. The orange nodes is the set of vertex guards computed by Algorithm \ref{alg:central} when DFS starts from the red node.} \label{fig:poly_example}
\end{figure}

\begin{figure}
\includegraphics[width=0.5\textwidth]{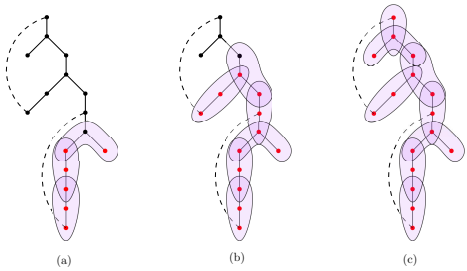}
\caption{Three snapshots of the triplet formation by Algorithm \ref{alg:central}. The dual graph is the same as that in Figure \ref{fig:poly_example}. The dashed lines denote the back edges during DFS traversal and the red nodes denote $\mathsf{covered}$ nodes. The purple regions represent the triplets formed.} \label{fig:dual_graph}
\end{figure}

\begin{theorem}\label{thm:alg2}
    For all polygons with $n \geq 4$, Algorithm \ref{alg:central} constructs a set of co-operative guards of size at most $\frac{n + 2h - 2}{2}$ in time $O(n \log n)$.
\end{theorem}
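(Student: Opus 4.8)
The plan is to verify the three assertions of Theorem~\ref{thm:alg2} separately: that the output is a valid cooperative guard set (coverage together with visible connectivity), that its size is at most $\frac{n+2h-2}{2}$, and that the running time is $O(n\log n)$. Throughout I would work with the weak dual graph $\mathcal{D}_T$, which by the fact recalled after Definition~\ref{def:triangulation} has exactly $N := n+2h-2$ nodes and $N+h-1 = n+3h-3$ edges. The guiding principle is that every point of $P$ lies in some triangle, so $P$ is fully guarded precisely when every node of $\mathcal{D}_T$ belongs to at least one triplet: a triplet is a length-$2$ path in $\mathcal{D}_T$, its three triangles share a common vertex by pigeonhole, and the guard placed there sees all three.

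As a preliminary I would confirm that the root chosen in line~3 of Algorithm~\ref{alg:central} exists and yields a binary tree. Since each triangle has three sides, every node of $\mathcal{D}_T$ has degree at most $3$; and since $\sum \deg = 2(N+h-1)$, not all nodes can have degree $3$ once $n \ge 4$, so a node of degree at most $2$ is available. Rooting there forces every node to have at most two children, which is exactly what makes the triplet-formation fall into the two cases of Figure~\ref{fig:triplets}. Coverage and the count then go together by structural induction on the DFS of Algorithm~\ref{alg:dfs}. The invariant I would maintain is that, when the call \textsc{DFS}$(u)$ returns, every node in the subtree of $u$ except at most the single retained third node has been placed in a triplet and marked $\mathsf{covered}$, and that each triplet formed so far has marked exactly two of its three nodes $\mathsf{covered}$ while retaining the third. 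Pushing this through the two cases of Figure~\ref{fig:triplets} (and the degenerate root case, where $\mathsf{p}(v)=v$) shows simultaneously that (i) every node ends up in a triplet, giving full coverage, and (ii) the $\mathsf{covered}$ markings are disjoint across triplets. Since there are $N$ nodes and each of the $k$ triplets consumes two distinct ones, $2k \le N$, i.e. $k \le \frac{n+2h-2}{2}$; as the algorithm places one guard per triplet, this bounds the number of guards.

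For connectivity I would show that the triplets, regarded as vertices that are adjacent whenever they share a node of $\mathcal{D}_T$, form a connected graph. This holds because consecutive triplets along tree edges share their retained node and the spanning tree is connected; the $h-1$ dual edges discarded when extracting the spanning tree only delete edges, never triangles, so they affect neither coverage nor this chaining. Invoking the observation already made in the text---if two triplets share a node $c$ then both of their guard vertices lie on triangle $c$ and are therefore mutually visible---upgrades connectedness of the triplet graph to visible connectivity of the guard set. For the running time, triangulating $P$ costs $O(n\log n)$ by Bar-Yehuda and Chazelle~\cite{BARYEHUDA1994} (or Garey \emph{et al.}~\cite{garey1978triangulating}), while building $\mathcal{D}_T$, extracting a rooted spanning tree, running the DFS, and placing one guard per triplet are each linear in $N = O(n)$; the triangulation therefore dominates, giving the claimed $O(n\log n)$.

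The step I expect to be the main obstacle is the joint induction underlying coverage and the count. The bookkeeping must pin down exactly which two nodes each triplet marks $\mathsf{covered}$, argue that two fresh nodes are always available (so that no triplet is ``wasted'' on fewer than two previously-uncovered nodes) and that marked nodes are never reused, and then dispose of the boundary cases cleanly: leaves, nodes with a single uncovered child, and especially the root, where the convention $\mathsf{p}(v)=v$ collapses the candidate set $\{u,\mathsf{p}(u),\mathsf{p}(\mathsf{p}(u))\}$ and where a final leftover node must be absorbed into the last triplet without inflating the count past $\frac{n+2h-2}{2}$. Everything else is either quoted from the preliminaries or a routine linear-time verification.
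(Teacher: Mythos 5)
Your proposal is correct and follows essentially the same route as the paper's proof: the same degree-counting argument for a degree-$\le 2$ root, the same ``each triplet consumes two fresh $\mathsf{covered}$ nodes'' argument giving $2k \le n+2h-2$, coverage and visible connectivity via structural induction over the DFS tree (the paper splits this into a separate counting proposition and a height-induction proposition, whereas you merge the count with coverage and phrase connectivity as chaining triplets that share nodes along tree edges --- a cosmetic repackaging of the same idea), and the same observation that triangulation dominates the $O(n \log n)$ runtime. One trivial nit: the dual graph has $h$ (not $h-1$) non-tree edges, but as you yourself note, the discarded edges are immaterial to both coverage and connectivity.
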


\begin{proof}
    The first step in Algorithm \ref{alg:central} is the most time consuming step. Many algorithms have been developed for triangulating polygons with holes that run in time $O(n \log n)$, we can use any of them. We then choose any node in the weak dual graph that has degree at most $2$. Such a node is guaranteed to exist because the number of vertices is $|V(\mathcal{D})| = n + 2h - 2$ and the sum of degrees of all nodes is $2 |E(\mathcal{D}_T)| = 2n + 6h - 6 < 3|V(\mathcal{D}_T)|$. Let $\mathcal{G}_T$ be the DFS tree obtained from line 4. We will show that the following invariant holds by induction on the number of triplets that are formed.
    \begin{proposition}
        During the course of the Algorithm \ref{alg:dfs}, when $t$ triplets are formed, the total number of nodes that are $\mathsf{covered}$ is at least $2t$.
    \end{proposition}
    \begin{proof}
        When $t = 0$ triplets are formed the proposition is trivial. Suppose $t(\geq 0)$ triplets have been constructed and the number of $\mathsf{covered}$ nodes in the graph is at most $2t$, then the next triplet is formed by either lines 12 or 15 of Algorithm \ref{alg:dfs}. By definition, the nodes in $C$ are previously $\mathsf{uncovered}$ before the triplet is formed. We also observe that the DFS traversal never visits those nodes that are marked $\mathsf{covered}$ again, therefore node $u$ must be $\mathsf{uncovered}$ in line 12. Therefore, after the new triplet is formed, the number of $\mathsf{uncovered}$ nodes is at least $2t + 2$, which proves the induction step. We conclude, by induction, that the proposition is true.
    \end{proof}
    At the end of the execution, let $t$ be the total number of triplets placed. The total number of $\mathsf{covered}$ nodes is at most $n$, therefore,
    \begin{equation}
        2t \leq |V(\mathcal{D})| \implies t \leq \frac{n+2h-2}{2}
    \end{equation}
    
    We now show that the guards are visibly connected and guard the entire polygon. For any node $u \in V(\mathcal{D}_T)$ let $t_u$ be the triplet computed during the DFS call at node $u$. If no triplet is formed then $t_u = \emptyset$. Define $S_u := \{t_v : v \text{ belongs to the sub-tree of } u\}$. If $v$ is the root of the DFS tree, then $S_v$ is just the set of all triplets computed. The height of a node is the maximum distance of the node to any leaf in its sub-tree.
    \begin{proposition}\label{prp:guard_set}
        For every non-leaf node $u \in \mathcal{G}_T$,
        \begin{enumerate}
            \item the set of guards formed by $S_u$ are visibly connected.
            \item every node in the sub-tree of $u$ belongs to at least one triplet of $S_u$.
        \end{enumerate}
    \end{proposition}
    \begin{proof}
        We use induction on the height of the nodes $u$. If the node $u$ has height $1$, then there are only two possibilities and they are shown in Figure \ref{fig:triplets}. We verify, by inspection, that the proposition is true for these cases. To ease notation, we say that a set of triplets is connected if the corresponding guards computed are visibly connected.
        
        Suppose all nodes at height $h (\geq 1)$ satisfy the proposition. Consider any node $u$ at height $h + 1$. 
        
        If any of child of $u$ is $\mathsf{uncovered}$ then $t_u$ contains both $u$ and the child. If $c$ is a child of $u$ that is $\mathsf{covered}$, then $t_c$ contains both $c$ and $u$. In both cases we see that all children of $u$ and $u$ are present in one of triplets of $S_u$. All other nodes are present in $S_u$ by the induction hypothesis. Therefore the second statement of the proposition is true for node $u$.
        
        Similarly we can show that $S_u$ is connected. If all children of $u$ are $\mathsf{covered}$, then no triplet is formed during DFS call at $u$. If $u$ contains only one child, $c$, then $S_u = S_c$ and is connected by the induction hypothesis. If $u$ contains two children, $c_1, c_2$, then $t_{c_1}$ and $t_{c_2}$ both contain $u$ and hence $S_u = S_{c_1} \cup S_{c_2}$ is connected. If $u$ contains a child $c_1$ that is $\mathsf{uncovered}$ then $t_u$ contains both $c_1$ and $u$ and so $S_{c_1} \cup \{t_u\}$ is connected. If $u$ contains only a single child, we are done. If $u$ contains another child, $c_2$, then either $t_u$ contains $c_2$ ($c_2$ is $\mathsf{uncovered}$) or $S_{c_2}$ contains a triplet containing $u$ ($c_2$ is $\mathsf{uncovered}$). In both cases $S_u$ is connected.
        
        From the above arguments, we can conclude that for all nodes $u$ at height $h+1$, the proposition is true and so by induction, we conclude that the proposition is true for all nodes $u \in V(\mathcal{G}_T)$.
    \end{proof}
    Applying the Proposition \ref{prp:guard_set} for the root node, we get that every triangle contains a guard and that the set of guards computed are visibly connected and thus form a solution to the {\em Cooperative Guards} problem. 
    All the steps after the triangulation are linear and so the total time complexity is $O(n \log n)$ which completes the proof of Theorem \ref{thm:alg2}.
\end{proof}

\subsection{Update Algorithm} \label{alg:central_update}
Algorithm \ref{alg:central} computes the guard set by constructing the dual graph of the polygon, by computing a decomposition of the dual graph into connected triplets. The set of triplets can then be used to compute the cooperative guard set. In this subsection, we look at efficient ways to update the set of triplets when the dual graph is modified by the addition or deletion of a leaf.

We observe that for any node $u$, the set of triplets computed during DFS calls of nodes that are neither ancestors nor descendents are independent. Therefore while adding a leaf to a node $v$, only the triplets computed during DFS calls of $v$ and its ancestors need to be changed. By looking at the triangle placed during DFS call at a child of $v$, we can determine whether the child was $\mathsf{covered}$ during DFS call at node $v$. This information is sufficient to update $t[v]$ by executing lines 11-19 in Algorithm \ref{code:central_update}. Repeating the same for the parents of $v$, we can update the set of triplets in $O(d(u))$ time where $d(u)$ is the depth of the node $u$. 
The details are presented in Algorithm \ref{code:central_update}.
\begin{algorithm}[htbp] 
    \caption{Algorithms for updating the desired triplets when $\mathcal{G}_T$ is changed by the addition of a leaf, note that the update is to be performed such that $\mathcal{G}_T$ remains a binary tree.} 
    \label{code:central_update} 
    \begin{algorithmic}[1]
        \Procedure{findTriplet}{$v$}
            \State C $\gets$ Set of children, $c$, of $v$ such that $v \not \in t[c]$
            \If{$|C|$ equals $1$}
                \State \Return $C \cup \{v, \texttt{p}(v)\}$ \Comment{Form a triplet with the node in $C$, $v$ and parent of $v$} 
            \ElsIf{$|C|$ equals $2$}
                \State \Return $C \cup \{v\}$ \Comment{Form a triplet with the nodes in $C$ and $v$}
            \Else
                \State \Return $\emptyset$ \Comment{No triplet placed}
            \EndIf
        \EndProcedure
        \Procedure {addLeaf}{$v$, $l$} \Comment{Add the node $l$ as a child to the node $v$}
            \State Set $\texttt{p}(l)$ as $v$ and add $l$ as a child of $v$ \Comment{Add node $l$ to the graph}
            \State $t[l] \gets \emptyset $ \Comment{Initialise $t[l]$ to an empty set}
            \State $t[v] \gets $ \Call{findTriplet}{$v$}
            \If{$v$ is not the root}
                \State \Call{addLeaf}{$\texttt{p}(v), v$}
            \EndIf
        \EndProcedure
        \Procedure {removeLeaf}{$v$, $l$} \Comment{Remove the child $l$ of $v$}
            \State Remove $l$ as a child of $v$ \Comment{Remove node $l$ from the graph}
            \State $t[v] \gets $ \Call{findTriplet}{$v$}
            \If{$v$ is not the root}
                \State \Call{removeLeaf}{$\texttt{p}(v), v$}
            \EndIf
        \EndProcedure
    \end{algorithmic}
\end{algorithm}

\begin{theorem}
    Suppose $t$ is a set of triplets computed by Algorithm \ref{alg:dfs} for some rooted binary tree $\mathcal{G}_T$ and $\mathcal{G}_T'$ be another binary tree obtained by adding a new leaf node $l$ as a child to a node $v \in V(\mathcal{G}_T)$, then the procedure \Call{addLeaf}{$v, l$} [\ref{alg:central_update}] modifies the set of triplets $t$ to $t'$, in time linear in the depth of $v$, such that $t'$ is the set of triplets computed by Algorithm \ref{alg:dfs} when run on the graph $\mathcal{G}_T'$.
\end{theorem}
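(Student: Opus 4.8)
The plan is to reduce the claim to a single locality property of the triplet map and then verify correctness node-by-node against a from-scratch run of Algorithm~\ref{alg:dfs} on $\mathcal{G}_T'$; the running-time bound is then immediate. For a node $c$ write $\mathrm{cov}(c)$ for the predicate $\mathsf{p}(c)\in t[c]$ (``$c$ is covered by its own triplet''). Inspecting the three branches of \textsc{findTriplet} (equivalently, the two cases of Figure~\ref{fig:triplets}) shows that $t[c]$, and hence $\mathrm{cov}(c)$, is determined solely by the fixed labels $c$ and $\mathsf{p}(c)$, by the list of children of $c$, and by the statuses $\mathrm{cov}(c')$ of those children. Unfolding this recurrence down to the leaves gives the key lemma: the triplet $t[u]$ assigned by Algorithm~\ref{alg:dfs} depends only on the subtree rooted at $u$ together with the (unchanging) label $\mathsf{p}(u)$.

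With the lemma in hand the structural part is short. Adding the leaf $l$ below $v$ changes the subtree rooted at $u$ precisely when $u$ equals $v$ or a proper ancestor of $v$; for every other node the rooted subtree is literally identical in $\mathcal{G}_T$ and $\mathcal{G}_T'$, so by the lemma its DFS triplet is unchanged and \textsc{addLeaf} correctly leaves $t$ untouched there. It therefore remains only to check that the procedure reproduces the correct Algorithm~\ref{alg:dfs} values at the $\mathrm{depth}(v)+1$ nodes on the path $v=u_0,\,u_1=\mathsf{p}(v),\,\dots,\,u_k=\text{root}$.

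I would prove this by induction up the path. For the base case $u_0=v$, its children in $\mathcal{G}_T'$ are the old children, whose triplets are unchanged and hence correct, together with $l$, for which $t[l]=\emptyset$, exactly the value Algorithm~\ref{alg:dfs} assigns to a leaf; so \textsc{findTriplet}$(v)$ returns the correct $\mathcal{G}_T'$ triplet at $v$. For the step at $u_i$ with $i\ge 1$, exactly one child, $u_{i-1}$, lies on the path and has already been recomputed to its correct value, every other child roots an unchanged subtree and so still holds its correct triplet, and $\mathsf{p}(u_i)$ is unchanged; since \textsc{findTriplet}$(u_i)$ depends only on these data, it returns the correct triplet at $u_i$. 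Carrying the induction up to $u_k$ and combining with the untouched off-path nodes yields $t'$ equal to the output of Algorithm~\ref{alg:dfs} on $\mathcal{G}_T'$.

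The step I expect to be most delicate is the order-of-evaluation bookkeeping forced by the lemma: because \textsc{findTriplet}$(u_i)$ reads $t[u_{i-1}]$ and the triplets of the off-path children, the argument must guarantee that the recursion processes the ancestors strictly bottom-up and that the values actually read are the already-corrected ones — in particular that $t[u_{i-1}]$ is not overwritten by a stale value before $t[u_i]$ is formed. Once this invariant is pinned down the time bound is immediate: the recursion visits only the $\mathrm{depth}(v)+1$ path nodes, and each \textsc{findTriplet} call inspects at most the two children of a node in the binary tree, for $O(\mathrm{depth}(v))$ total work.
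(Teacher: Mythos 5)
Your proof is correct and takes essentially the same approach as the paper's: both rest on the key observation that a child $c$'s covered status can be read off its stored triplet via the test $\texttt{p}(c)\in t[c]$, so triplets at non-ancestors of $v$ are unaffected and those at $v$ and its ancestors can be recomputed bottom-up by re-running the per-node rule — your locality lemma and path induction are just a more structured rendering of the paper's prose. The one subtlety you flag (that $t[u_{i-1}]$ must not be clobbered before $t[u_i]$ is formed) is real and is glossed over by the paper as well: as literally written, the recursive call \textsc{addLeaf}$(\texttt{p}(v),v)$ executes $t[v]\gets\emptyset$ before \textsc{findTriplet}$(\texttt{p}(v))$ is evaluated, so the pseudocode needs the initialization line restricted to the genuinely new leaf.
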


\begin{proof}
    The running time of \Call{addLeaf}{$v, l$} is linear in the depth of $v$ because in every recursive call to \Call{addLeaf}{}, the parent of $v$ is called. To see that the set of triplets is exactly the same, note that \Call{DFS}{} calls on nodes that are not ancestors of $v$ are not affected by the addition of the leaf $l$. Now to correct the position of the triplets for $v$ and its ancestors, we can simply re-execute lines 10-17 of Algorithm \ref{alg:dfs} for $v$ and its ancestors (starting from $v$ and moving up until we reach the root). In order to determine the set $C$, we had a marker for each node, $\texttt{covered}$, for each node in the dual graph, however it is sufficient to look at the triplets that were placed during DFS calls to the children. A child $c$ of a node $u$ will be $\texttt{covered}$ during an execution of \Call{DFS}{$u$} if and only if $t[c]$ computed during \Call{DFS}{$c$} contains $v$. Once the set $C$ is determined, the triplet $t[v]$ can be determined and hence the theorem is true.
\end{proof}

\begin{figure}
\includegraphics[width=0.5\textwidth]{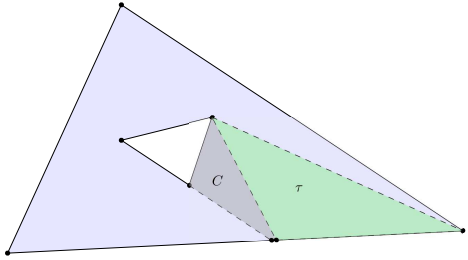}
\caption{The gray shaded region is a channel $C$, and the corresponding triangle $\tau$ is shaded green. Every vertex of $\tau$ can see the entire channel $C$, and the new polygon has 1 extra vertex (The hole vertex is deleted and two new vertices are created at the side of the triangle).} \label{fig:channel}
\end{figure}

\subsection{Reducing the number of guards}

We use the following theorem from Sachs and Souivane \cite{BjorlingSachs1995}, to reduce the Cooperative Guards problem for a polygon with $n$ vertices and $h$ holes to a simple polygon with $n + h$ vertices.

\begin{theorem}
    Given a polygon $P$ with $n$ vertices and $h$ holes, there exists a quadrilateral (called a channel) $C$ that lies on the polygon $P$ which when removed from $P$ creates a polygon $P'$ such that,
    \begin{enumerate}
        \item $P'$ has $n+1$ vertices and $h-1$ holes.
        \item There exists a triangle, $\tau$, belonging to some triangulation of $P'$, such that the channel $C$ is entirely visible from all the vertices of $\tau$.
    \end{enumerate}
\end{theorem}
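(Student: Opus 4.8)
The plan is to realise the channel as a thin quadrilateral ``curtain'' glued along a straight chord that joins the hole to a different boundary component, and then to read off both conclusions from the thinness and convexity of this curtain.

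First I would put $P$ in general position (rotate so that no two vertices share a $y$-coordinate), let $H$ be the hole we wish to eliminate, and let $v$ be the topmost vertex of $H$. Shooting a ray vertically upward from $v$, let $w$ be the first point of $\partial P$ that it meets. Because $v$ is the highest vertex of $H$, the ray leaves $H$ at once and cannot return to it, so $w$ lies on a different component $B'$ of $\partial P$ (either the outer boundary or another hole) and, by genericity, in the relative interior of some edge $e$ of $B'$. The open segment $vw$ then lies in the interior of $P$ and forms an unobstructed spine connecting $H$ to $B'$.

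Next I would thicken this spine into the channel. Fixing a small $\varepsilon > 0$, let $w_1, w_2$ be the two points of $e$ at distance $\varepsilon$ from $w$ on either side, and let $C$ be the thin quadrilateral with top edge $w_1 w_2 \subseteq e$ whose two side walls run down to the corner of $H$ at $v$, absorbing that corner. Removing $C$ opens a corridor from $H$ to $B'$, which topologically merges the two components, so $P' := P \setminus C$ has $h-1$ holes. The vertex accounting is the routine part and is exactly as depicted in Figure~\ref{fig:channel}: the apex vertex $v$ of $H$ disappears while the two points $w_1, w_2$ are introduced on $e$, a net change of $+1$, so $P'$ has $n+1$ vertices. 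Choosing $\varepsilon$ small guarantees $C \subseteq P$ and keeps the walls simple, so $P'$ is a genuine polygon with $h-1$ holes, which establishes the first claim.

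For the second claim I would observe that after the removal the edge $w_1 w_2$ becomes a boundary edge of $P'$, hence a side of exactly one triangle $\tau$ in any triangulation of $P'$, with $C$ the thin convex region lying just across this edge. All of $\tau$ trivially sees the window $w_1 w_2$; the real content is that each vertex of $\tau$ sees the whole spine down to $v$ through that window, and this is the step I expect to be the main obstacle. Seeing the window is immediate, but a sightline from a vertex of $\tau$ to a point of $C$ near the hole must be shown to stay inside $P$, which is not automatic in a non-convex polygon with holes: the triangle spanned by the viewer, $w$, and a low point of $C$ could in principle leave $P$. I would secure it by taking $\varepsilon$ so small that $C$ collapses onto the unobstructed spine $vw$, so that the sight-cone of each fixed vertex of $\tau$ through the window eventually contains the vanishingly thin sliver $C$, and by using the freedom to choose the triangulation of $P'$ so that the apex of $\tau$ is a vertex from which the spine is already visible; convexity of $C$ then promotes ``sees every point of the spine'' to ``sees all of $C$.'' Degenerate configurations ($w$ near an endpoint of $e$, or the ray grazing $\partial P$) are ruled out by genericity together with a sufficiently small $\varepsilon$.
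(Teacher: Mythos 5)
First, a framing note: the paper does not prove this statement at all --- it is imported from Bjorling-Sachs and Souvaine \cite{BjorlingSachs1995} (Lemma~2 for $h=1$, Section~4 in general), with only the example in Figure~\ref{fig:channel} given here. So your attempt is measured against that construction, and it has two genuine gaps. The first is the vertex count, which you dismiss as ``the routine part'' but which actually fails for your channel. Your two walls join $w_1$ and $w_2$ to the single point $v$, so $C$ is a triangle rather than a quadrilateral, and after removing it the boundary of $P'$ still turns at $v$ twice: once between the hole edge entering $v$ and the wall $vw_1$, and once between the wall $w_2v$ and the other hole edge. Thus $v$ does not disappear; the boundary is pinched at $v$ (so $P'$ violates Definition~\ref{def:polygon_simple}, which requires distinct vertices), and the corner count is $n+3$, not $n+1$. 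The alternative reading of ``absorbing that corner'' --- attaching the walls at two distinct points flanking $v$ on the two hole edges --- removes the pinch but creates two vertices there, again giving $n+3$. The only way a hole vertex genuinely vanishes, which is exactly what the caption of Figure~\ref{fig:channel} asserts, is to make a wall the collinear extension of a hole edge incident to $v$, so that hole edge and wall fuse into one straight edge of $P'$; your vertical ray from $v$ has no reason to produce such collinearity, and arranging it while keeping $P'$ simple and the count at $n+1$ is precisely the care taken in \cite{BjorlingSachs1995}.

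The second gap is in part 2. You identify $\tau$ by claiming that ``$w_1w_2$ becomes a boundary edge of $P'$, hence a side of exactly one triangle in any triangulation of $P'$.'' This is false: $w_1w_2$ is the open mouth of the removed slit, its relative interior lies outside $P'$ (the edge $e$ survives in $P'$ only as the two sub-edges on either side of the mouth), so $w_1w_2$ is not an edge of $P'$, not even a chord, and no triangle of any triangulation of $P'$ can have it as a side. With that premise gone you have no candidate for $\tau$, and the step you yourself flag as the main obstacle --- that all three vertices of $\tau$ see every point of $C$, with sightlines measured in the original polygon where the hole is an obstacle --- is never established. The limiting argument you propose cannot repair it: $w_1, w_2$, hence $P'$, hence its triangulations and the triangle $\tau$, all depend on $\varepsilon$, so you cannot fix a vertex of $\tau$ and then let $\varepsilon \to 0$; and ``use the freedom to choose the triangulation so that the apex of $\tau$ sees the spine'' asserts exactly what has to be proven. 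Constructing the channel and the triangle $\tau$ simultaneously so that both properties hold is the actual content of Section~4 of \cite{BjorlingSachs1995}, and it is missing from your proposal.
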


The theorem is mentioned when $h=1$ as Lemma 2 in \cite{BjorlingSachs1995}, but the above theorem is discussed and proved in Section 4 of the paper where they describe the algorithm for placing guards. Naturally, we would like to repeatedly apply the above theorem $h$ times to reduce the number of holes to zero, however we need to ensure that the addition of channels does not interfere with the existence of the triangles $\tau$ for the previous channels. In Section 4 of \cite{BjorlingSachs1995}, they show that such a channel does exist, however one needs to select such a channel carefully. The computation of all channels necessary can be done in $O(n^2)$. The exact details are present in \cite{BjorlingSachs1995}. An example of such a channel is shown in Figure \ref{fig:channel}.

The reduction shows that any solution to the cooperative guards problem for the reduced simple polygon with $n + h$ vertices is also a solution to the original polygon with $n$ vertices and $h$ holes. Using Algorithm \ref{alg:central} we can obtain a cooperative guard set with no more than $\frac{n + h - 2}{2}$ guards. 

\textbf{A note on precision.} In the construction of a channel, the base of the quadrilateral $C$ (See Figure \ref{fig:channel}) has a length $\epsilon$ that can be chosen arbitrarily. As the procedure is repeated, the additional points generated as a result of the decomposition may lie inside the polygon, however the points are always intersections between (i) a line joining two points of a hole of the polygon and (ii) a line joining two points each of which might have been generated when previous holes were removed. One concern is that the precision with which the points need to compute might become exponentially large in $h$ as more points are computed. We argue that this is not the case. Consider the set of points obtained by the intersection of lines obtained by joining a pair of points of the polygon. There are $O(n^4)$ such points. Let $d$ be the smallest distance between all pairs of points constructed (including points on the polygon). The points constructed by the algorithm are close to these $O(n^4)$ points. In particular, for every point computed by the algorithm, one of the $O(n^4)$ points is within a distance of $h \epsilon$ from it (each new point is at a distance of at most $\epsilon$ away from a previously known point by construction).
Thus, to ensure that the points computed are reasonably correct, it is sufficient to choose $\epsilon$ such that $h \epsilon < \frac{d}{2}$. Suppose that all the points can be represented accurately with $c \log n$ bits (i.e. they are on an integer grid of size $[0, n^c] \times [0, n^c]$), then $d \leq \frac{1}{n^c}$ and consequently $\epsilon < \frac{1}{2h n^c}$ is sufficient. This shows that if we can compute the points by rounding upto $\log{2h} + c\log{n}$ additional bits, then the set of points computed by the algorithm remains a valid solution to the cooperative guards problem.


\section{Model}\label{model}

Our models are closely related to the existing finite communication and finite state models which have been established for robot exploration on general graphs.
Refer \cite{DCME_book, FLOCCHINI201657, sugihara1996distributed} for a thorough survey of such models. 
In our case the agents move inside a polygonal region with $n$ vertices and $h$ holes. 

We assume that initially a minimum of $\frac{n+2h-2}{2}$ agents are deployed at a single point in the polygon $P$. 
Each agent is modeled as a deterministic processor with some finite (not necessarily {\bf constant}) amount of persistent memory that can be used for local computations. The agents are considered as point objects on the $2D$ plane.
Additionally we also assume that each agent has a set of externally visible lights that it can choose to turn on or off. These colors allow the agents to broadcast bits to all agents that are visible to it. Each agent has a unique ID associated with it. 
All the agents operate in synchronised $\texttt{Look-Compute-Move}$ (LCM) cycles. 

\textbf{Look}. Each agent orients itself in a particular direction and performs a $360$ degree sweep of its visibility polygon. After the look operation, the agent has knowledge of its visibility polygon, the vertices of the visibility polygon that are polygonal vertices and all agents that are on or within the visibility polygon as co-ordinates with respect to a common reference frame. An agent can also see the colors that are turned on in all the agents that are visible to it.

\textbf{Compute}. Each agent executes an algorithm (which is same for all agents) whose result is a destination point within its visibility polygon (which could possibly be the same as the current location).

\textbf{Move}. Each agent moves to the destination point computed in the previous compute cycle and moves to the destination in a straight line.

\textbf{Communication}. In order to enable communication, we provide each agent with a set of externally visible lights that can be turned on or off in the $\texttt{compute}$ cycle. In order to send bits to other agents, an agent $A$ can compute and set the corresponding lights in the current $\texttt{compute}$ cycle and in the subsequent $\texttt{look}$ cycle, any agent that can see $A$ can deduce the set of bits transmitted by $A$. Thus each communication cycle can be viewed as a local broadcast. 

\textbf{Depth Perception.} We call the model described above, the {\em depth perception} model, owing to the fact that the agents can perceive distances clearly and determine co-ordinate of other vertices. We assume that in this case the vertices of the polygon are integer points within an $[0, n^c] \times [0, n^c]$ integer grid for some constant $c$. As a consequence $O(\log n)$ bits are sufficient to compute the co-ordinates. Moreover, all points computed by our algorithms have rational co-ordinates and can be represented precisely within $O(\log n)$ bits.

\textbf{Proximity Perception.}
Note that the $\texttt{Look}$ cycle is the only cycle that captures information about the polygon. 
In this case, the agents are powerful enough to completely and accurately map their visibility polygon.
In a realistic scenario, it might not be possible to obtain such parameters with a good enough precision.
An algorithm that does not rely on the computation of such co-ordinates, would be more favourable (and also more {\it elegant}).
It would also enable the use of less accurate mapping schemes (such as Photogrammetry) during a \texttt{Look} cycle.
It is thus of both theoretical and practical interest to us to determine how much information about the polygon is required from a $\texttt{Look}$ cycle in order to solve the {\it Cooperative Guards} problem, which leads us to a consider another model, which we call {\em proximity perception} in which agents have restricted sensing abilities. 
This differs from the {\em depth perception} model only in the manner in which the visibility polygon and positions of other agents  obtained by each agent during a {\texttt{look}} cycle are encoded. 
Under proximity perception, the output of a \texttt{Look} cycle is a pair of sequences, one representing the {\em vertex-limited } visibility polygon (See Definition \ref{def:visbility_vertices}) and the other representing the agents that are within its visibility polygon. 
Both sequences are sorted, first by the angle that they subtend (with respect to the reference frame of the agent) and then by the distance from the agent. 
Figure \ref{fig:look_cycle} shows an example of the output of a {\texttt {look}} cycle. In this model, we only receive sequences of vertices and agents but not their co-ordinates. 
Thus, the view of each agent is a combinatorial object that can be determined without the ability to perceive {\em exact} distances.
This combinatorial object enables the agents to compare the angles subtended by two points in its visibility polygon. 

\textbf{Parameters of an Algorithm}. We quantify or measure an algorithm by several parameters. The first is the round complexity. The second is total number of broadcasts performed by all agents. The third, is the amount of {\it persistent} memory that each agent has. Persistent memory are those set of bits that are never lost and are stored throughout the algorithm. In order to map out the visibility polygon, each agent potentially needs a large amount of bits since the visibility polygon could have $\Theta(n)$ vertices, however such information does not need to be stored across multiple rounds. We assume that the agents have sufficient number of volatile bits that are reliable for only one round, and these volatile bits are used to store the visibility polygon and perform additional computation as required.





\begin{figure}
\includegraphics[width=0.8\textwidth]{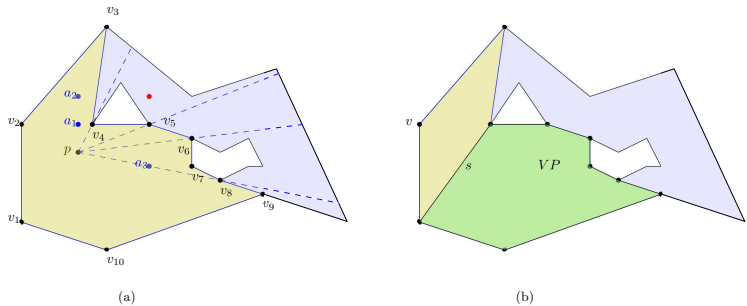}
\caption{(a)The vertex limited visibility polygon obtained for the point (agent) $p$ is shaded yellow. The two sequences returned during a look cycle an agent at $p$ are $\{v_1, v_2, \dots v_{10}\}$ and $\{a_1, a_2, a_3\}$. The red point is an agent not visible to $p$. Note that $a_3$ does not obstruct the visibility of $v_8$ from $p$. (b) crop($VP, s, v$) (shaded in green)}\label{fig:look_cycle}
\end{figure}



\section{Depth Perception Model}\label{sec:algorithm}

In this section we provide algorithms for deploying agents that solve the Cooperative Guards problem in polygons with holes for the {\em depth perception} model. We provide algorithms that trade-off between the amount of persistent memory required by an agent and the number of broadcasts performed. The round complexity of all the algorithms are linear in $n$.

\textbf{Warmup: Agents with large memory capacities.} If one agent has sufficient amount of memory, then we can map out the entire polygon by an exploration algorithm which runs in $O(n)$ rounds. We first choose the agent that is responsible for mapping the polygon.
 (The agents have unique IDs so for example the agent with the least ID maybe chosen as a ``leader'').
Since all information is stored in one agent, the agent can start at any arbitrary vertex, compute the vertex-limited visibility polygon and then move to all the new vertices obtained, compute their vertex limited polygons and so on. By merging all the visibility polygons we can obtain the co-ordinates of all vertices of the polygon and the sides. The movement can be visualised as a graph traversal on a spanning tree of the vertices of the polygon and hence runs in $O(n)$ rounds. Once the polygon is obtained, the agent can compute the guard set as described in Section \ref{warmup} (with the reduction to $\frac{n+h-2}{2}$ agents) and then assign triplets to the other agents (perhaps in increasing order of their IDs). This gives our first result in a straightforward manner. Refer Algorithm \ref{alg:dist:warmup} for the pseudo-code.

\begin{algorithm}
    \caption{Warmup Algorithm}
    \label{alg:dist:warmup}
    \begin{algorithmic}[1]
        \State \Comment{Code executed by each node $v$}
        \State \textbf{Phase 1:} Leader Election \Comment{Takes $1$ round}
        \State \textbf{Broadcast: } ID \Comment{Each node broadcasts its own ID}
        \If{ID is least among all visible IDs}
            \State $\mathsf{leader} \gets \mathsf{true}$ 
            \State Push current location to a stack $S$
        \EndIf
        \State \textbf{Phase 2:} Graph Exploration \Comment{Takes $O(n)$ rounds}
        \If{$\mathsf{leader}$}
            \State \textbf{Look : } Obtain the vertex limited visibility polygon $VP$
            \State \textbf{Compute : }
            \State Merge $VP$ to the locally computed map of the polygon $P$
            \ForAll{vertices $v \in VP$ that have not yet been visited}
                \State Add $v$ to the list of visited vertices
                \State Push $v$ into a stack $S$
            \EndFor
            \If{$S$ is empty}
                \State Using the centralized algorithm, compute a cooperative guard set
                \State Compute an Euler Tour of any tree that spans the vertices corresponding to the cooperative guard set
                \State Trigger computation of Phase $3$
            \Else 
                \State $\mathsf{dest} \gets S.top$ \Comment{Set destination as the point on the top of the stack} 
                \State Pop top element from $S$ 
            \EndIf
            \State \textbf{Move : } Move to $\mathsf{dest}$
        \EndIf
        \State \textbf{Phase 3:} Guard placement \Comment{Takes $O(n)$ rounds}
        \If{$\mathsf{leader}$}
            \State $\mathsf{dest} \gets $ Next vertex in the Euler Tour
            \If{current position has been visited for the first time}
                \State \textbf{Broadcast:} Highest ID node that has not yet been visited and instruct it to not move
            \EndIf
            \State \textbf{Broadcast:} Co-ordinates of $\mathsf{dest}$
        \Else
            \State \textbf{Receive:} Co-ordinates of $\mathsf{dest}$ and/or instruction to not move
            \If{Instruction to not move is received}
                \State Terminate
            \EndIf 
        \EndIf
        \State \textbf{Move :} Move to $\mathsf{dest}$
    \end{algorithmic}
\end{algorithm}

\begin{figure}
\includegraphics[width=0.8\textwidth]{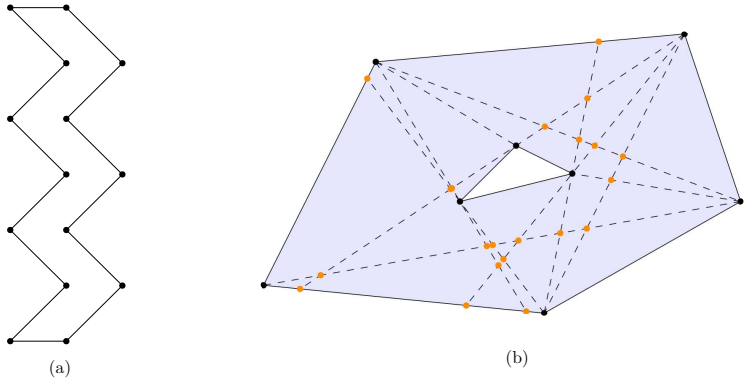}
\caption{(a) A polygon such that any set of cooperative guards have visibility graph with diameter $\Omega(n)$ (b) The decomposition into convex regions by drawing lines between all pairs of vertices} \label{fig:sec4fig}
\end{figure}

\begin{theorem}\label{thm:dist:alg1}
    There exists a distributed algorithm, in the depth perception model, that deploys $\frac{n + h - 2}{2}$ agents in a polygon with $n$ vertices and $h$ holes that takes $\Theta(n)$ rounds and performs $\Theta(n)$ broadcasts, where exactly one agent has $O(n \log n)$ bits of persistent memory and rest have $O(\log{n})$ bits.
\end{theorem}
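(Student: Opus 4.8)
The plan is to analyze the three phases of Algorithm~\ref{alg:dist:warmup} in turn, reducing the \emph{correctness} of the final configuration to Theorem~\ref{thm:alg2} together with the hole-removal reduction of Section~\ref{warmup}, and then to bound the rounds, broadcasts, and memory phase by phase. The organizing idea is that the elected leader single-handedly simulates the centralized algorithm: it first learns the polygon, runs Algorithm~\ref{alg:central} with the reduction to $\frac{n+h-2}{2}$ guards inside its own memory, and then escorts the remaining agents onto the computed guard vertices. Leader election is immediate: since all agents are initially co-located and hence mutually visible, a single round of ID broadcasts followed by a minimum comparison elects a unique leader and costs only $O(\log n)$ bits per agent.

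For the exploration phase I would show that the vertices the leader discovers, under the relation ``$v$ first appears in the vertex-limited visibility polygon computed at $u$'' (Definition~\ref{def:visbility_vertices}), form a rooted tree $\mathcal{E}$ on the $O(n)$ vertices of $P$. The crucial observation is that each parent--child pair in $\mathcal{E}$ is mutually visible, since a child is by construction a vertex of its parent's visibility polygon; therefore every move the leader makes---to a freshly discovered child or back toward the root---is a move into its current visibility polygon and so is a legal one-round move. A depth-first traversal of $\mathcal{E}$ crosses each of its $O(n)$ edges a constant number of times, so after $O(n)$ rounds every vertex has been visited, the stack is empty, and merging the $O(n)$ collected visibility polygons gives the exact map of $P$. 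Storing this map costs $O(n\log n)$ bits because vertex coordinates are integers in a grid of side $n^{c}$, whereas the idle agents retain only their $O(\log n)$-bit identifiers.

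For the deployment phase I would use that the guard vertices produced by the simulated centralized algorithm are, by Theorem~\ref{thm:alg2}, visibly connected and together see all of $P$; in particular some guard sees the common start point $p_0$, so the visibility graph on the guard vertices together with $p_0$ is connected. The leader computes a spanning tree of this graph rooted at $p_0$ and an Euler tour of it. As consecutive tour vertices are joined by a tree edge, they are mutually visible, so the leader advances one visible hop per round and the still-mobile agents follow by moving to the broadcast destination, which, being visible from the leader, is also visible from the co-located followers. Whenever a guard vertex is reached for the first time, the leader broadcasts the identifier of the highest-ID unassigned agent and tells it to halt, depositing exactly one agent per guard vertex. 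Since there are $\frac{n+h-2}{2}$ guards and the Euler tour has length linear in that count, this phase runs in $O(n)$ rounds and issues $O(n)$ broadcasts of $O(\log n)$ bits each; the final placement coincides with the guard set, so correctness and the agent bound follow from Theorem~\ref{thm:alg2}.

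It remains to match these upper bounds with $\Omega(n)$ lower bounds, for which I would invoke the polygon of Figure~\ref{fig:sec4fig}(a), whose every cooperative guard set induces a visibility graph of diameter $\Omega(n)$. Because agents begin co-located and each move and each broadcast is confined to a visibility polygon, both the set of occupied points and the reach of any instruction can grow in graph distance by only $O(1)$ per round; assembling a connected configuration of diameter $\Omega(n)$ therefore forces $\Omega(n)$ rounds and $\Omega(n)$ broadcasts. The step I expect to be most delicate is the exploration bound: one must confirm that the stack-based traversal can always be realized as a walk along the mutually visible edges of $\mathcal{E}$, never requiring a jump to a stack-top vertex that is invisible from the leader's current location, since it is precisely this property that keeps every move to a single round and the total to $O(n)$.
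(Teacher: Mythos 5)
Your proposal follows essentially the same route as the paper: elect the minimum-ID agent as leader among the co-located agents, have that single leader perform a DFS-style exploration along mutually visible vertices to map the polygon in $O(n)$ rounds with $O(n\log n)$ bits of memory, run the centralized algorithm with the hole-removal reduction to $\frac{n+h-2}{2}$ guards entirely inside the leader's memory, and then escort the remaining agents along an Euler tour of a spanning tree of the guards' visibility graph, with the $\Omega(n)$ bound coming from the polygon of Figure~\ref{fig:sec4fig}. Your write-up is in fact more careful than the paper's sketch on exactly the points it glosses over (realizing stack backtracking as a walk along tree edges rather than a jump to a possibly invisible stack top, and anchoring the tour at the common start point so the followers can actually receive the leader's broadcasts), so it is a faithful, slightly strengthened version of the same argument.
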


We note that the round complexity of the algorithm obtained above is {\em existentially} optimal, in particular for every integer $n \geq 4$, there exists a polygon with $n$ vertices for which any algorithm (even if randomized) requires $\Omega(n)$ rounds. To see this, construct polygons where the diameter of the visibility graph of any co-operative guard set is $\Omega(n)$. One such example is given in Figure \ref{fig:sec4fig}.

\textbf{Agents with small memory.}\label{alg:sm} The problem is more challenging when the maximum amount of persistent memory is limited. Suppose each agent has at most $O(\log{n})$ bits of persistent memory, then we can still attain the same round complexity at the cost of performing more number of broadcasts. 

{\em Phase 1. } Setup and Intuition. Since a single agent cannot remember the entire polygon, we split the information equally among all the  $O(n)$ agents. We simultaneously explore and build a triangulation of the polygon. The exploration is  different from the one described above, since we need to construct the triangulation as well, we perform a graph traversal on the dual graph $\mathcal{D}_T$ of the polygon. We do not know the dual graph in advance, but we will construct it as we go along. Before starting the exploration we order the agents by their IDs and compute, for each agent, the agent with the next smallest and next largest IDs. We can perform these computations in a single round with $O(n)$ broadcasts. The ordering will help us to choose the agents to store some given information once a new triangle is computed. In this algorithm, during the exploration phase, we have all agents move together simultaneously. 

{\em Phase 2. } \label{alg:sm:2} Polygon Exploration. We construct the first triangle such that it has degree at most $2$ in the dual graph (as in Section \ref{warmup}). The computed triangle and its coordinates are stored by one of the agents. Once a triangle, say $\tau$, has been constructed, we look to construct triangles adjacent to the sides of $\tau$. To determine such triangles, we move all agents to one of the ends of the sides (say $s$) of $\tau$ that needs to be explored. Then compute the vertex limited visibility polygon $VP$ and crop out the portion of $P$ on the same side of $s$ as the triangle $\tau$ (see Definition \ref{def:crop}), since this part of the polygon is already covered. Let the cropped polygon be $P$.

Next, we seek a point $v \in P$ such that the triangle formed by the ends of $s$ and $v$ does not intersect any other triangle placed so far. For some triangle $\tau'$, the set of such vertices belonging to $P$ can be described as $O(1)$ contiguous sub-sequences of the vertices of the polygon $P$ and this information can be broadcast in a single round since it requires only $O(\log n)$ bits. For all triangles computed so far, these triangles are stored in some agent. Each of these agents  can then compute the corresponding $O(1)$ sub-sequences for the $O(1)$ triangles that they have been assigned to and then broadcasts this information. The leader can then take the intersection of all the sub-sequences and compute the next valid triangle. Thus, constructing a new triangle requires $O(1)$ rounds and $O(n)$ broadcasts. We can repeat this procedure to construct all the $n + 2h - 2 = O(n)$ triangles. While adding the new triangles we can also store parent pointers to (the agents containing) the previous triangle, so the dual graph can also be constructed.

{\em Phase 3. } \label{alg:sm:3} Placing Guards. Once the triangulation has been computed, we can simulate the centralized Algorithm \ref{alg:dfs} in $O(n)$ rounds and $O(n)$ broadcasts. Pseudo code for each phase is presented in \ref{code:sm},\ref{code:sm2} and \ref{code:sm3}. This leads to our next result. 

\begin{algorithm}
    \caption{Phase 1: Algorithm for agents with small persistent memory}
    \label{code:sm}
    \begin{algorithmic}[1]
        \Statex {\bf {Phase 1} : $O(1)$ rounds} \Comment{Order the agents, and construct the first triangle}
        \State {\bf Broadcast.} ID 
        \State Compute $rank, pred, succ$ and store them
        \State Move to any vertex $v$ (chosen by leader) of the polygon
        \State If leader, compute a triangle $t$ that has a side $s$ of the polygon and vertex $v$ 
        \Statex \Comment{Computed and stored only by the leader, i.e. agent with $rank=1$}
        \Statex \Comment{$t$ has degree at most $2$ in the dual graph}
        \State $\texttt{p}(t) \gets t$ \Comment{parent of root is itself in dual graph}
        \State If $rank = 1$, $T \gets \{t\}$ and $S \gets $ set containing sides of $t$ \Comment{store triangles and sides}
        \State If $rank = 1$, {\bf{Broadcast.} $t$}
    \end{algorithmic}
\end{algorithm}
\begin{algorithm}
    \caption{Phase 2 : Algorithm for agents with small persistent memory}
    \label{code:sm2}
    \begin{algorithmic}[1]
        \Statex {\bf {Phase 2} : $O(n)$ rounds} \Comment{Triangulation via DFS}
        \Statex \Comment{$t \gets $ latest triangle that was broadcasted}
        \If{$t \in T$ and {\bf only} $t$ was received in previous round}
            \If{$t$ contains a side $s$ that was not previously chosen}
                \State $uv \gets $ ends of the side $s$
                \State {\bf Broadcast.} $uv$ with an instruction to move
            \Else 
                \If{$\texttt{p}(t) \neq t$}
                    \State {\bf Broadcast.} $\texttt{p}(t)$
                \Else 
                    \State {\bf Broadcast.} Move to Phase 3
                \EndIf
            \EndIf
        \EndIf
        \If {vertices $uv$ is received with instruction to move} 
            \Statex \Comment{Assume that every agent broadcasts to itself as well}
            \State Move to $u$
            \State $VP \gets $ current vertex limited visibility polygon 
            \State $S \gets$ set of vertices $w$ of $VP$ such that $\mathsf{area}(\Delta uvw \cap t) = 0$
            \State {\bf Broadcast : } $S$ as a set of $O(1)$ contiguous intervals
        \EndIf
        \If{$t \in T$ and sets $S_i$ received from all agents}
            \State $S' \gets \cap_{i} S_i$ \Comment{Compute intersection of all sets}
            \If{$S'$ is empty}
                \State {\bf Broadcast.} $t$
            \Else
                \State $w \gets $ some vertex in $S'$
                \State $t' \gets \Delta uvw$
                \If{$|T| < 5$}
                    \State Append $t'$ to $T$
                    \State $\texttt{p}(t') \gets t$
                    \State {\bf Broadcast.} $t'$
                \Else 
                    \State {\bf Broadcast.} $t', t$ with instruction to add to $succ$
                \EndIf
            \EndIf
        \EndIf
        \If{$t', t$ is received with instruction to add to current ID} 
            \State Append $t'$ to $T$
            \State Set $\texttt{p}(t')$ to $t$
            \State {\bf Broadcast.} $t'$
        \EndIf
    \end{algorithmic}
\end{algorithm}

\begin{algorithm}
    \caption{Phase 3 : Algorithm for agents with small persistent memory}
    \label{code:sm3}
    \begin{algorithmic}[1] 
        \Statex {\bf {Phase 3(a)} : $O(n)$ rounds} \Comment{Computing Co-operative guard set}
        \Statex \Comment{Initially, $t$ is the root of the dual graph (first triangle constructed)}
        \Statex \Comment{Every triangle is marked $\mathsf{uncovered}$ and $\mathsf{unvisited}$}
        \If{$t \in T$ and $t$ is the last triangle that is broadcasted} 
            \State $C \gets$ set of children of $t$ \Comment{Can be found in two rounds}
            \If{any child $c \in C$ is $\mathsf{unvisited}$}
                \State {\bf Broadcast.} $c$
            \Else
                \State $C' \gets$ set of children of $t$ that are $\mathsf{uncovered}$
                \State $T' \gets $ top $3$ triangles with highest depth in $C' \cup \{t, \texttt{p}(t), \texttt{p}(\texttt{p}(t))\}$
                \State $v \gets$ common vertex in all triangles of $T'$
                \State Broadcast and store $v$ in the agent that contained highest depth triangle in $T$
                \State If $\texttt{p}(t) \in T$, then mark $t$ as $\mathsf{covered}$
                \State If $\texttt{p}(t) = t$ then, {\bf Broadcast.} Move to Phase 3(b)
                \State Else, {\bf Broadcast.} $\texttt{p}(t)$
            \EndIf
        \EndIf
    \end{algorithmic}
\end{algorithm}

\begin{theorem}\label{thm:dist:alg2}
    There exists an algorithm, in the depth perception model, that deploys $\frac{n + 2h - 2}{2}$ agents as cooperative guards, such that
    \begin{enumerate}
        \item it runs in $O(n)$ rounds 
        \item it performs $O(n^2)$ broadcasts, each of size $O(\log n)$ bits 
        \item each agent requires at most $O(\log{n})$ bits of persistent memory.
    \end{enumerate}
\end{theorem}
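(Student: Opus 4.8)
The plan is to argue that the three-phase procedure of Algorithms~\ref{code:sm}, \ref{code:sm2} and \ref{code:sm3} faithfully emulates the centralized Algorithm~\ref{alg:central}, so that correctness — every point of $P$ seen by some agent, the agents visibly connected, using at most $\frac{n+2h-2}{2}$ guards — follows directly from Theorem~\ref{thm:alg2}, and then to bound the three resources separately. The first fact I would record is that throughout Phases~1 and~2 all agents are co-located (they move together to a single vertex $u$), so they are mutually visible and every broadcast reaches every agent. This lets the leader aggregate information from all agents within a single round using volatile, one-round memory, while committing only $O(\log n)$ bits to persistent memory; this is the key to reconciling the $O(\log n)$ per-agent memory bound with the need to combine $\Theta(n)$ pieces of information per construction step.

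For correctness I would establish two things. First, Phase~2 builds a valid triangulation $T$ of $P$ together with a rooted spanning tree of its dual graph that is binary: the root is a triangle of dual-degree at most $2$ (as in Section~\ref{warmup}), and since each triangle has three sides, one of which leads to its parent, at most two lead to children, so the discovery tree is binary exactly as Algorithm~\ref{alg:central} requires. A back edge of the dual graph (the triangle across the explored side $s$ is already placed) is detected automatically, because the already-placed neighbour occupies the entire region of the unexplored side incident to $s$, leaving no non-overlapping apex; the set $S'$ is then empty and the algorithm backtracks instead of creating a spurious triangle. Second, Phase~3 replays Algorithm~\ref{alg:dfs} on this tree: at each node it recovers the uncovered children and forms the triplet exactly as lines~10--17 prescribe, the covered status of a child $c$ being deducible from whether the stored triplet $t[c]$ contains $t$. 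Hence the deployed guards coincide with those of the centralized algorithm and Theorem~\ref{thm:alg2} applies verbatim.

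The main obstacle, and the step I would treat most carefully, is the geometric claim underlying Phase~2: for a fixed already-placed triangle $\tau'$, the vertices $w$ of the cropped unexplored region for which $\Delta uvw$ avoids the interior of $\tau'$ form $O(1)$ contiguous runs in the angular order around $u$ of the vertex-limited visibility polygon. The crux is that, with the base $s=uv$ fixed and $\tau'$ convex, the apexes $w$ for which $\Delta uvw$ meets $\tau'$ are governed by a constant number of angular thresholds (the lines through $u$, through $v$, and through the vertices of $\tau'$), so both the forbidden set and its complement decompose into $O(1)$ runs of consecutive vertices and fit in $O(\log n)$ bits. I would make this precise by a case analysis of how the segments $uw$ and $vw$ sweep past the convex obstacle $\tau'$ as $w$ advances. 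Granting this, each of the $O(n)$ agents broadcasts the $O(1)$ runs induced by the $O(1)$ triangles it stores; the leader intersects these $O(n)$ descriptions in volatile memory to obtain one globally valid apex $w$ (or certifies $S'=\emptyset$ and backtracks), and writes only the single chosen triangle to persistent memory. I would also confirm that any non-overlapping apex extends the current partial triangulation to a full one, a standard consequence of the fact that a triangle built on a boundary edge of the still-untriangulated region can always be completed to a triangulation of $P$.

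Finally I would tally the resources. There are $n+2h-2 = O(n)$ triangles; the dual DFS traverses each dual edge a constant number of times and spends $O(1)$ rounds per step (co-located movement, one aggregation round, one construction round), so Phase~2 and Phase~3 each take $O(n)$ rounds and Phase~1 takes $O(1)$, giving $O(n)$ rounds overall. The $O(n^2)$ broadcast bound comes from Phase~2, where each of the $O(n)$ construction steps has all $O(n)$ agents broadcast their $O(1)$ interval descriptions; Phase~3 contributes only $O(n)$ broadcasts and is dominated. Every broadcast encodes a triangle or an interval set as a constant number of grid coordinates, hence $O(\log n)$ bits. For persistent memory, the triangles are distributed so that each agent stores a constant number of them (the $|T|<5$ rule hands overflow to the successor agent, which always succeeds because the $\frac{n+2h-2}{2}$ agents have total capacity far exceeding $n+2h-2$ triangles); together with a constant number of pointers and one guard location this is $O(\log n)$ bits per agent. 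Combining these three tallies establishes the theorem.
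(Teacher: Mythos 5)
Your proposal follows essentially the same route as the paper's own argument: co-located agents performing a DFS-style incremental triangulation of the dual graph, where each agent reports the valid apexes for its $O(1)$ stored triangles as $O(1)$ contiguous intervals that the leader intersects, triangle storage is spread across agents via the $|T|<5$ hand-off rule, and the centralized triplet DFS (Algorithm~\ref{alg:dfs}) is then simulated, giving the same tally of $O(n)$ rounds, $O(n)\times O(n) = O(n^2)$ broadcasts of $O(\log n)$ bits, and $O(\log n)$ persistent bits per agent. You in fact supply slightly more justification than the paper does (e.g., back-edge detection via $S'=\emptyset$ and a proof sketch for the $O(1)$-interval claim, which the paper only asserts), so there is no gap relative to the paper's proof.
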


\textbf{Improving Broadcast Complexity.} \label{alg:broadcast}
In the first two algorithms, while broadcasting vertex positions, since all agents were in the same position we can encode the vertex $v$ using a number from $1$ to $n$ by ordering the vertices based on the angle that they subtend w.r.t the current vertex, so we can avoid sending co-ordinates. The advantage of not having to send exact co-ordinates is that the algorithms work even if the agents are located in an arbitrary grid as long as the co-ordinates can be computed with sufficient precision. If we assume that the vertices of the polygon are in $[0, n^c] \times [0, n^c]$ integer grid, then we can send co-ordinates as well in one round (this requires only $c \log n = O(\log{n})$ bits), and consequently we can simulate parallel algorithms for triangulation that have been established in the CREW PRAM model. 

First we compute co-ordinates of the vertices of the polygon by doing a depth-first search on the polygon similar to the previous case. However instead of building the triangulation simultaneously, only perform depth-first search on the vertices of the polygon (this can be done by replacing triangles in Phase 2(\ref{alg:sm:2}) of the previous algorithm by vertices of the polygon). In the previous case, the broadcast complexity was dominated by the process of computing valid triangles by each agent, which is no longer required for us in this scenario. Thus the broadcast complexity of the depth-first search procedure is $O(n)$.

\begin{lemma}
    Any algorithm $\mathcal{A}$ that solves a problem in the CREW PRAM model with $O(n)$ processors and has a running time $T(n)$ and uses $O(n \log n)$ bits of total memory may be simulated by a set of $\Omega(n)$ agents, each capable to possessing $O(\log n)$ persistent bits of memory within $T(n)$ rounds and using $O(n T(n))$ broadcasts.
\end{lemma}

\begin{proof}
    Each agent acts as a processor that is capable of performing fundamental bitwise (and hence also arithmetic) operations using $O(\log n)$ bits. 
    
    First we split the total memory accessed by $\mathcal{A}$ equally amongst all the agents. Since the total memory used is $O(n \log n)$ and there are $\Omega(n)$ agents, each agent has at most $O(\log n)$ bits. Each agent is given the responsibility to simulate some $O(1)$ processors. 
    
    {\em Concurrent Read.} Before a read, all agents broadcast the bits in their memory locations along with their IDs. If a processor $p$ in $\mathcal{A}$ reads a memory location at some address $A$, the agent responsible for $p$ looks at the information broadcast by agent containing $A$. For all the agents, this costs $n$ broadcasts and $O(1)$ rounds.
    
    {\em Exclusive Write.} To simulate writing of a memory location by a processor $p$, the agent responsible for $p$ broadcasts this information along with the address. In the next round the agent containing the address updates the value stored in its memory. Since each agent is responsible for $O(1)$ processors, all the writes corresponding to a single round in $\mathcal{A}$ can be simulated within $O(1)$ rounds and $O(n)$ broadcasts.
    
    Each read and write costs at most $O(n)$ broadcasts and hence if the running time of the algorithm is $T(n)$, then the number of broadcasts is $O(nT(n))$.
\end{proof}

Thus, we can simulate Goodrich's algorithm \cite{GOODRICH1989327} (which runs in time $O(\log n)$ and requires $O(n \log n)$ bits) for computing the triangulation in the CREW-PRAM model which requires only $O(\log n)$ additional rounds while requiring only $O(n \log n)$ broadcasts. This gives our next result,

\begin{theorem}\label{thm:dist:alg3}
    There exists an algorithm, in the depth perception model, that deploys $\frac{n + 2h - 2}{2}$ agents, such that
    \begin{enumerate}
        \item it runs in $O(n)$ rounds 
        \item it performs $O(n \log n)$ broadcasts, each of size $O(\log n)$ bits 
        \item each agent requires at most $O(\log{n})$ bits of persistent memory.
    \end{enumerate}
\end{theorem}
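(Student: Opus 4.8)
The plan is to split the algorithm into three phases — vertex exploration, parallel triangulation, and guard placement — and bound the resources of each separately, reusing the distributed machinery already built. The guiding observation, flagged in the paragraph preceding the statement, is that the $O(n^2)$ broadcast cost of Theorem~\ref{thm:dist:alg2} came entirely from forcing each of the $\Omega(n)$ agents to certify the validity of every one of the $O(n)$ constructed triangles. By first learning the polygon's geometry and only then triangulating it with an off-the-shelf parallel algorithm, we remove this per-triangle overhead.

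In the first phase I would run the depth-first traversal of Phase~2 of the small-memory algorithm (\ref{alg:sm:2}) but over the \emph{vertices} of $P$ rather than the dual graph, as already sketched in the text preceding the theorem. Because all agents remain co-located and thus share a reference frame, each newly discovered vertex can be named by its angular rank around the current position and broadcast in $O(\log n)$ bits; the traversal visits each vertex $O(1)$ times, giving $O(n)$ rounds and $O(n)$ broadcasts, and the discovered coordinates are distributed round-robin so that each agent stores $O(1)$ of them, i.e.\ $O(\log n)$ persistent bits. In the second phase, with the vertex set now resident in distributed memory, I would apply the CREW PRAM simulation lemma to Goodrich's triangulation algorithm~\cite{GOODRICH1989327}, which runs in $T(n)=O(\log n)$ time on $O(n)$ processors using $O(n\log n)$ bits of total memory. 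The lemma converts this into $O(\log n)$ rounds, $O(n\,T(n))=O(n\log n)$ broadcasts, and $O(\log n)$ bits per agent, and outputs the triangulation and its dual graph (with parent pointers) in the same distributed layout used by the small-memory construction.

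The third phase is identical to Phase~3 of the small-memory algorithm (code~\ref{code:sm3}): a distributed emulation of the centralized DFS of Algorithm~\ref{alg:dfs} that walks the dual graph, forms the connected triplets, and seats one guard at each triplet's common vertex, costing $O(n)$ rounds and $O(n)$ broadcasts. Correctness — that these $\tfrac{n+2h-2}{2}$ guards are visibly connected and collectively see all of $P$ — transfers verbatim from Theorem~\ref{thm:alg2} and Proposition~\ref{prp:guard_set}, since the triplets produced coincide with those of the centralized algorithm. Summing the phases yields $O(n)$ rounds, $O(n\log n)$ broadcasts, and $O(\log n)$ persistent bits per agent.

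The step I expect to be delicate is discharging the hypotheses of the simulation lemma for Goodrich's algorithm: one must confirm that its working memory really is $O(n\log n)$ bits (so it partitions into $O(\log n)$ bits per agent) and that its access pattern is genuinely concurrent-read/exclusive-write, so that the read/write simulation applies unchanged. A minor related point is that the distributed layout produced by the exploration phase must match the input format Goodrich's routine expects; at worst this needs one $O(1)$-round redistribution whose broadcast cost is absorbed into the $O(n\log n)$ term.
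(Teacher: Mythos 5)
Your proposal matches the paper's own proof essentially step for step: the paper likewise runs the small-memory DFS exploration over the vertices of $P$ (dropping the per-triangle validation that caused the $O(n^2)$ broadcasts), then invokes the CREW PRAM simulation lemma on Goodrich's $O(\log n)$-time, $O(n\log n)$-bit triangulation algorithm to get $O(n\log n)$ broadcasts in $O(\log n)$ extra rounds, and finishes with the same dual-graph DFS guard placement inherited from the small-memory algorithm. The delicate points you flag (Goodrich's memory bound and CREW access pattern, input layout) are exactly the hypotheses the paper also relies on, stated there without further elaboration.
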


{\textbf{Reducing Guards.}} We can simulate the centralized algorithm that performs the reduction to $\frac{n + h - 2}{2}$ agents by broadcasting co-ordinates for all the agents. For creating each channel, in one round the agents can broadcast all the co-ordinates of the points that it sees. Each agent can then locally maps out the entire polygon. We have each agent compute a candidate channel and communicate this information to the other agents. If multiple channels can be formed, then we select the valid channel by the least ID agent. The construction of the holes needs to be done sequentially to prevent the channels from interfering with each other. Thus, this requires $O(n)$ broadcasts for the reduction to each hole and so we can compute the reduced guard set in an additional $O(h)$ rounds and $O(n h)$ broadcasts in total. Once we have reduced the problem to a simple polygon with $n + h$ vertices, we may use any of the earlier algorithms to compute the co-operative guard set.

\begin{theorem}\label{thm:dist:alg4}
    There exists an algorithm, in the depth perception model, that deploys $\frac{n + h - 2}{2}$ agents, such that
    \begin{enumerate}
        \item it runs in $O(n)$ rounds 
        \item it performs $O(n (h + \log n))$ broadcasts, each of size $O(\log{n})$ bits 
        \item each agent requires at most $O(\log{n})$ bits of persistent memory.
    \end{enumerate}
\end{theorem}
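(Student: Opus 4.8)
The plan is to emulate, in the depth-perception model, the centralized channel-reduction of Sachs and Souvaine and then invoke the distributed guard-placement algorithm of Theorem~\ref{thm:dist:alg3} on the resulting hole-free polygon. Since each agent keeps only $O(\log n)$ persistent bits, no single agent can store the whole polygon; instead I would first run the vertex-exploration described in the ``Improving Broadcast Complexity'' paragraph (a DFS over the vertices rather than the dual graph), costing $O(n)$ rounds and $O(n)$ broadcasts, and distribute the $n$ vertex coordinates so that each agent persistently holds $O(1)$ of them. This layout lets every agent reconstruct the current polygon in its volatile memory within a single round: all agents broadcast the $O(1)$ coordinates they store, which is $O(n)$ broadcasts total, after which each agent holds the full vertex list for the remainder of that round.

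Next I would create the $h$ channels one at a time, processing holes sequentially so that, as guaranteed by the reduction theorem above, the channels do not interfere with one another. In each iteration every agent rebuilds the current polygon (one round, $O(n)$ broadcasts as just described), locally computes a candidate channel $C$ together with the triangle $\tau$ whose vertices see all of $C$, and the least-ID agent that produced a valid candidate announces its choice; the chosen channel deletes one hole vertex and introduces the two new base vertices, which are then assigned to some agent's persistent store. Each channel removes one hole and adds one vertex, so after $h$ iterations we are left with a simple polygon on $n+h$ vertices and $0$ holes. The precision note preceding Section~\ref{model} guarantees that the newly created points remain representable in $O(\log n)$ bits (it suffices to round to $\log 2h + c\log n$ extra bits), so the $O(\log n)$ persistent-memory budget is never violated. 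This stage takes $O(h)$ rounds and $O(nh)$ broadcasts.

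Finally I would run the algorithm of Theorem~\ref{thm:dist:alg3} on the simple polygon with $n' = n+h$ vertices and $h' = 0$ holes. It produces a cooperative guard set of size $\frac{n'+2h'-2}{2} = \frac{n+h-2}{2}$ in $O(n') = O(n)$ rounds using $O(n'\log n') = O(n\log n)$ broadcasts and $O(\log n)$ persistent memory, and by the reduction theorem this guard set simultaneously guards and keeps visibly connected the original polygon (every vertex of each $\tau$ sees its channel, so no point is lost and connectivity is preserved when the channels are reinstated). Summing the three stages gives $O(n)+O(h)+O(n)=O(n)$ rounds and $O(n)+O(nh)+O(n\log n)=O(n(h+\log n))$ broadcasts, with $O(\log n)$ bits per agent. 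Because $\frac{n+h-2}{2}\le\frac{n+2h-2}{2}$ for $h\ge 0$, the $\frac{n+2h-2}{2}$ deployed agents suffice, the surplus simply remaining parked.

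The main obstacle I anticipate is the bookkeeping of the sequential channel construction under the memory bound: one must verify that each channel can be computed from the distributed representation of the current polygon alone, that the least-ID tie-break indeed selects a globally valid channel (the Sachs--Souvaine argument that a non-interfering channel always exists is what I would lean on), and that the coordinate precision does not degrade across the $h$ rounds---this last point being precisely what the precision note is designed to handle.
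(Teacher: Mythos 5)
Your proposal is correct and follows essentially the same route as the paper: a distributed emulation of the Sachs--Souvaine channel reduction, building the $h$ channels sequentially with a least-ID tie-break (costing $O(h)$ rounds and $O(nh)$ broadcasts), followed by running the Theorem~\ref{thm:dist:alg3} algorithm on the resulting simple polygon with $n+h$ vertices to obtain $\frac{n+h-2}{2}$ guards. The only cosmetic difference is how agents obtain a volatile-memory map of the current polygon (your DFS-distributed coordinate store versus the paper's agents broadcasting the coordinates of what they see), which does not change the approach or the stated bounds.
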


\section{Proximity Perception Model}\label{sec:algorithm:proximity}

In the previous section, all algorithms depended on the ability to compute exact distances and co-ordinates. In this section we look at the {\em proximity perception} model, where the agents have limited sensing capabilities (refer Section \ref{model} for more details). Since co-ordinates cannot be determined, it is difficult to perform even fundamental computational geometry operations such as computing the intersection of diagonals. In this section we show that it is possible to construct a cooperative guard set using $\frac{n + 2h - 2}{2}$ agents, however our algorithm requires $O(n^4)$ rounds. 

We proceed in a manner similar to Algorithm \ref{alg:sm} described for the depth perception model. This time, instead of having all the agents move together, we place the agents on the vertices of the polygon that they store and only move the remaining agents. This will help us to perform some geometric operations such as determining whether two diagonals intersect. To explore the polygon using only $\frac{n + 2h - 2}{2}$ agents, we cannot afford to place agents on all the vertices, so our strategy is to compute, simultaneously along with the triangulation, the cooperative guard set (that guards the triangles computed so far) using the update algorithms presented in Section \ref{warmup}. We shall present the algorithm in several phases, that run concurrently with each other in regular intervals.

\subsection{Triangulation}

\textbf{ \ \ Exploration}. Every agent is either an $\texttt{explorer}$ or a $\texttt{guard}$ at any given point in time. The $\texttt{explorer}$ agents move along the polygon to discover new vertices and construct triangles. The $\texttt{guard}$ agents are placed as solutions to the Co-operative guards problem for the sub-polygon that has been explored so far. Initially all agents are $\texttt{explorer}$ agents and during the course of a round all $\texttt{explorer}$ agents move together. Once a guard position has been determined (which corresponds to a triplet of triangles), one of the $\texttt{explorer}$ agent changes its type to $\texttt{guard}$ and guards all the triangles in its triplet. The first triangle, say $\tau_0$, is computed such that its degree in the dual graph is at most $2$. 

The triangulation and exploration procedure is similar to Phase 2 (refer \ref{alg:sm:2}) of Algorithm \ref{alg:sm}, except that in this case we can no longer rely on co-ordinates and existing $\texttt{guards}$ and all agents need not be in the same location. Suppose we have constructed a set of triangles and looking to extend the triangulation by forming a triangle with side $s$ (with endpoints $uv$) neighboring to an already constructed triangle $\tau'$. The explorer agent moves to one end of $s$, constructs the cropped visibility polygon $P$ (as in \ref{alg:sm:2}). To determine if $\Delta uvw$ is valid (i.e. can be added to the triangulation) for some $w \in P$, we shall find if the line segments $uw$ and $vw$ properly intersect any of the sides of triangles constructed so far. This is the most challenging and time consuming step and is described separately in the next paragraph.

If both the diagonals $uw$ and $vw$ are valid and the triangle $uvw$ has not been constructed so far, we add $\Delta uvw$ to the triangulation and update the guard formation by emulating the update procedure (Algorithm \ref{alg:central_update}) in the distributed setting. A centralized version of the pseudo-code is provided in Algorithm \ref{alg:proximity}. We leave it to the reader to verify that each step maybe emulated in the distributed setting.

\begin{algorithm}
    \caption{Pseudo Code for the Proximity Perception Algorithm (Centralized version)}
    \label{alg:proximity}
    \begin{algorithmic}[1]
        
        \State $t \gets$ triangle whose one side is the edge of a hole or the outer boundary of the polygon
        \State Initialize dual graph with triangle $t$, and place a guard at any vertex of $t$
        \While{\texttt{true}}
            \State \Call{Explore}{ }
        \EndWhile
        
        \Procedure{Explore}{ }
            \Statex DFS based search on the polygon for new vertices and triangles
                \State $t \gets $ current triangle in the DFS exploration
                \State Find a side $s$ of $t$ that is yet to be explored
                \If{such a side $s$ does not exist}
                    \State $t' \gets$ the parent of $t$ in the dual graph 
                    \State If $t$ is the root, then terminate, otherwise move to $t'$
                \Else
                    \State $VP \gets $ vertex limited visibility polygon at current vertex
                    \State $w \gets $ vertex of $t$ that is not on $s$
                    \State $u, v \gets$ vertices on ends of $s$
                    \State $S \gets \texttt{crop}(VP, s, w)$ 
                    \ForAll{vertices $q$ in $S$}
                        \If {\Call{Validate}{$u, q$} $\And$ \Call{Validate}{$v, q$}}
                            \State Add the triangle $t' \gets \{u, v, q\}$
                            \State \Call{Update}{$t', t$} \Comment{Algorithm \ref{alg:central_update}}
                        \EndIf
                    \EndFor
                \EndIf
        \EndProcedure
        \Procedure{Validate}{$u, v$}
            \Statex Determine if line joining $u, v$ intersects any other diagonal
            \State $p \gets u$ \Comment{current position}
            \While{$p \neq v$}
                \State $q \gets $ the point on line $pv$ closest to $u$ such that visibility polygon changes 
                \Statex \Comment{From Lemma \ref{lemma:ghosh}, point $q$ must be at the intersection of two diagonals}
                \If{there exists a diagonal intersecting $pv$}
                    \State \Return \texttt{false}
                \EndIf
            \EndWhile
            \State \Return \texttt{true}
        \EndProcedure
    \end{algorithmic}
\end{algorithm}

\textbf{Validation of triangles}. Ghosh \cite{GHOSH2010718} provided an interesting relation between the intersection points of the lines joining pair of points of $P$ and the visibility polygons of points of a polygon $P$. The diagonals of $P$ and all the intersection points of all pairs of diagonals, decompose $P$ into a set of convex regions, say $\mathcal{D}$. An example of such a decomposition is shown in Figure \ref{fig:sec4fig}. The following lemma (taken from Ghosh \cite{GHOSH2010718}) shows us that this decomposition is very useful,

\begin{lemma}\label{lemma:ghosh}
    For every point $p \in P$ and every convex region $C \in \mathcal{D}$, either all points in $C$ are visible from $p$, or none of them are visible from $p$.
\end{lemma}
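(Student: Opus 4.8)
The plan is to argue by contradiction, exploiting the convexity of the cells of $\mathcal{D}$. Suppose some cell $C \in \mathcal{D}$ contains a point $x$ that is visible from $p$ and a point $y$ that is not. Since $C$ is convex, the whole segment $\overline{xy}$ lies in $C$. Consider, for each $t \in [0,1]$, the point $z_t = (1-t)x + t\,y$ and whether it is visible from $p$; this indicator equals ``visible'' at $t=0$ and ``invisible'' at $t=1$. First I would show that this indicator is locally constant away from a finite set of critical parameters, so that there is a well-defined transition point $z \in \overline{xy}$, lying in the relative interior of $C$, at which visibility from $p$ changes.

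The heart of the argument is to characterise such a transition point. At $z$ the segment $\overline{pz}$ passes from lying entirely in $P$ to meeting the exterior of $P$ (Definition~\ref{def:visbility}). By continuity this can only occur when $\overline{pz}$ first becomes tangent to the boundary of $P$, i.e. when it grazes some reflex vertex $w$ of $P$ without crossing an edge transversally. Hence $p$, $w$ and $z$ are collinear, so $z$ lies on the line spanned by $p$ and the polygon vertex $w$. Since $\mathcal{D}$ is generated by the lines through pairs of polygon vertices, this line is one of the lines defining the decomposition, so $z$ lies on the boundary of some cell, contradicting $z$ being an interior point of $C$. This contradiction shows that no cell can contain both a visible and an invisible point, which is the claim; convexity of the cells then finishes the proof immediately.

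The step I expect to be the main obstacle is establishing rigorously that \emph{every} transition point on $\overline{xy}$ lies on one of the arrangement lines of $\mathcal{D}$ — equivalently, that the boundary separating the visible and invisible parts of $P$ is contained in the union of the diagonals and their supporting lines. The delicate points are: (i) showing that a change of visibility forces tangency at an actual polygon vertex rather than at an interior edge point, which requires ruling out a transversal first crossing by a limiting argument on the blocking edge; (ii) confirming that the line through $p$ and the grazed vertex $w$ is indeed one of the lines generating $\mathcal{D}$ (this holds cleanly when $p$ is itself a polygon vertex, as in the calls \Call{Validate}{$u,v$} where the moving endpoint starts at a vertex, and is the key configuration to pin down); and (iii) handling polygons with holes, where $w$ may lie on a hole boundary, together with degenerate placements where $x$ or $y$ sits on $\partial C$ or several windows coincide — which I would absorb by perturbing $x,y$ within the open cell and passing to the limit.
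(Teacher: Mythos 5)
You should know at the outset that the paper contains no proof of this lemma at all: it is imported verbatim from Ghosh~\cite{GHOSH2010718} (``The following lemma (taken from Ghosh...)''), so there is no internal argument to compare yours against, and your proposal has to stand on its own. Your overall strategy --- locate a visibility transition point $z$ on a segment inside a cell, show $\overline{pz}$ must graze a reflex vertex $w$, conclude $z$ lies on an arrangement line --- is indeed the standard route, and your step (i) is fine. But the step you flagged as (ii) is not a technicality that perturbation can absorb: it is a genuine, unfixable gap, because the lemma \emph{as stated} (quantified over every point $p \in P$) is false. When $p$ is not a vertex, the line through $p$ and the grazed reflex vertex $w$ is generally not a line of $\mathcal{D}$, and the window of $VP(p)$ really does slice through cell interiors. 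Concretely, take the L-shaped polygon with vertices $(0,0),(2,0),(2,1),(1,1),(1,2),(0,2)$ and $p=(1.5,0.2)$. The window of $VP(p)$ leaves the reflex vertex $(1,1)$ along the line of slope $-1.6$ through $p$, which contains no pair of polygon vertices. The points $(0.93,1.09)$ and $(0.93,1.13)$ are separated by no line through two vertices of $P$, hence lie in a common cell of $\mathcal{D}$; yet the first is visible from $p$ (its segment to $p$ crosses the line $x=1$ at height $\approx 0.98<1$), while the second is not (its segment clips the exterior notch just above the reflex vertex). So no completion of your argument, or any other argument, can prove the statement for arbitrary $p$.

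What is true --- and what Ghosh actually proves, and all that the paper's \textsc{Validate} procedure needs --- is the lemma with $p$ restricted to the \emph{vertices} of $P$: each cell of $\mathcal{D}$ sees an all-or-nothing set of vertices. For that version your argument goes through, since the line through the two vertices $p$ and $w$ is a generating line of the arrangement; and it suffices for the application because the moving agent only tracks its vertex-limited visibility polygon (which vertices it sees), and by symmetry of visibility that set is constant on each open cell, so it can change only where the agent's path crosses an arrangement line. Two further corrections you would need even for this restricted statement. First, your transition point $z$ lies \emph{beyond} $w$ on the extension of $\overline{pw}$, so $\mathcal{D}$ must be generated by the full chords (lines through vertex pairs, extended inside $P$), not by the diagonal segments alone --- with diagonals only, even the vertex version fails; the paper is ambiguous here (its text says ``the diagonals of $P$ \dots decompose $P$'' while its figure caption says ``drawing lines between all pairs of vertices''), and your proof forces the latter reading. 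Second, points on a window are themselves visible (visibility polygons are closed), so the all-or-nothing claim must be asserted for open cells or cell interiors; stating it that way also removes most of the degeneracies you propose to handle in (iii) by perturbation.
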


We can leverage the above decomposition to determine if two diagonals intersect. To check if a diagonal between two vertices $u_1, u_2$ and a diagonal between vertices $v_1, v_2$ intersect, we place an agent at $u_1$ (chosen arbitrarily) and then move on the line joining $u_1, u_2$ until the vertex-limited visibility polygon sees different vertices than those seen at $u_1$. Repeatedly constructing those points on the line, we construct a subset of the intersection points computed above. If $u_1, u_2$ and $v_1, v_2$ intersect, either the agent is exactly at the intersection point or there are two points on either side of the diagonal $v_1v_2$ such that the agent is present at the points in consecutive rounds. In both cases an agent at $v_1$ can determine by looking at the order of the angles, if the agent moving from $u_1$ to $u_2$ crosses the diagonal and communicate the results.  The number of such points in each diagonal is $O(n^2)$ and the agent moves to each in constant number of rounds (See proximity perception model in Section \ref{model}). Since there are in total at most $O(n)$ candidate diagonals to be checked to extend each of the $O(n)$ edges of the triangulation, visiting all such points throughout the algorithm requires $O(n^4)$ rounds. The above procedure works only if an agent is present at $v_1$, which is not always the case, since there are at most $\frac{n + 2h - 2}{2}$ agents placed and $n$ vertices. However note that each vertex $v_1$ is part of a triplet and the agent corresponding to the triplets store the $O(1)$ vertices corresponding to its visibility polygon. Thus, we can have the $\texttt{guard}$ agents patrol all the vertices that it stores, for the sole purpose of facilitating the detection of intersection of diagonals. The patrol takes exactly $5$ rounds and we can ask the agent moving from $u_1$ to $u_2$ to wait for $5 = O(1)$ rounds to ensure that all diagonals have been checked. 

\textbf{Update Guard Placement.} Once a candidate triangle is found, we can simulate the triplet update algorithm presented in Section \ref{warmup} to update the guard set. Each guard stores the triangles corresponding to its triplet, along with the vertices and edges. Since there are at most $5$ vertices and $7$ sides corresponding to each triplet, we can store this information using $O(\log n)$ bits (as persistent memory). The update algorithm for a node $v$ of the dual graph is then executed by the guard corresponding to the triplet containing it. We keep track of the number of triangles constructed so far, to determine if a new guard is required. When the number of triangles increases to an even number that is at least $4$, the number of triplets (and agents) increases by one. At the location where the new triangle is created, either a new agent is deployed for the triplet containing the new node or the existing triplet is modified. The triplets already placed covering the parent of the node are pushed above and eventually the entire triplet configuration is constructed. The triplet update can be run in parallel with the exploration procedure above, however the run time is dominated by the time required to determine if diagonals intersect. This gives our final result,

\begin{theorem}\label{thm:dist:proximity}
    There exists an algorithm, in the proximity perception model, that places $\frac{n + 2h - 2}{2}$ cooperative guards in $O(n^4)$ rounds. 
\end{theorem}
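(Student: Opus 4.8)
The plan is to show that the procedure sketched above faithfully emulates the centralized construction of Algorithm~\ref{alg:central} while never relying on coordinates, so that correctness reduces to results already established and only the round complexity and the geometric crossing test require genuinely new work. Concretely, I would argue that the \texttt{explorer} agents perform a DFS on the dual graph $\mathcal{D}_T$ exactly as in Phase~2 of Algorithm~\ref{alg:sm}, and that each time a new triangle $\Delta uvw$ is accepted, the distributed emulation of the update routine (Algorithm~\ref{code:central_update}) maintains precisely the triplet decomposition that Algorithm~\ref{alg:dfs} would have produced on the current dual tree. Granting this emulation, the guard set present after the last triangle is the very one analysed in Theorem~\ref{thm:alg2}; hence it is cooperative (visibly connected and covering every triangle) and has size at most $\frac{n+2h-2}{2}$. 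The only place where proximity perception bites is the test that decides whether $\Delta uvw$ is legal, i.e. whether neither new diagonal $uw$ nor $vw$ properly crosses an already-placed diagonal.

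For that test I would invoke Ghosh's decomposition (Lemma~\ref{lemma:ghosh}). Moving an agent from $u_1$ along the segment $u_1u_2$, the vertex-limited visibility polygon changes combinatorially exactly when the agent passes from one convex cell of $\mathcal{D}$ into the next, and by the lemma these transition points are precisely the intersections of $u_1u_2$ with other diagonals (together with the polygon's own features). There are $O(n^2)$ such points on a single diagonal, and under proximity perception the agent can advance to the next one in $O(1)$ rounds, since it only needs to detect a change in its combinatorial view rather than measure a distance. To decide whether $u_1u_2$ crosses a particular diagonal $v_1v_2$, I would station an agent at $v_1$ and have it compare, across consecutive stops of the moving agent, the angular order of the moving agent relative to $v_2$; a crossing is certified either by the moving agent landing on the intersection or by the angular order flipping between two consecutive stops. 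I expect the bookkeeping here --- correctly handling the degenerate case where a stop coincides with the intersection, and proving that the angular-flip test produces neither false positives nor false negatives --- to be the main technical obstacle, and it is the step I would write out with the greatest care.

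The second obstacle is that only $\frac{n+2h-2}{2}$ agents are available, so the endpoint $v_1$ of an arbitrary existing diagonal need not be occupied. I would resolve this exactly as outlined: since every vertex of the current triangulation is stored by some guard's triplet (each triplet carrying $O(1)$ vertices), let each \texttt{guard} patrol its at most $5$ stored vertices on a fixed $5$-round cycle, and have the moving agent pause $5$ rounds at each transition point. During such a pause every diagonal endpoint is visited by its guard, so a \emph{single} walk of the candidate diagonal simultaneously tests it against all $O(n)$ existing diagonals, inflating the per-stop cost only by a constant factor. I would also verify that the concurrent triplet update is truly local: by Algorithm~\ref{code:central_update} each accepted triangle touches only the triplets of its ancestors in the dual tree, which the emulation handles in $O(1)$ amortized rounds, so the update never dominates.

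Finally I would assemble the round count. The DFS visits $O(n)$ triangulation edges; extending across one edge examines $O(n)$ candidate third vertices, each needing $O(1)$ diagonal walks; each walk has $O(n^2)$ transition points, each costing $O(1)$ rounds after the constant patrol pause, while that one walk resolves all $O(n)$ existing diagonals at once. Multiplying gives $O(n)\cdot O(n)\cdot O(n^2)\cdot O(1)=O(n^4)$ rounds, with the guard-update emulation running concurrently inside this budget. This yields the claimed $O(n^4)$-round algorithm placing $\frac{n+2h-2}{2}$ cooperative guards, completing the proof.
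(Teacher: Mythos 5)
Your proposal follows essentially the same route as the paper's own argument: reducing correctness to the centralized triplet construction and its update routine, using Ghosh's decomposition (Lemma~\ref{lemma:ghosh}) to detect diagonal crossings via combinatorial changes in the vertex-limited visibility polygon along a walked segment, resolving the shortage of stationed agents by having each guard patrol its $O(1)$ stored vertices on a $5$-round cycle while the walker pauses, and assembling the identical $O(n)\cdot O(n)\cdot O(n^2)\cdot O(1)=O(n^4)$ round count. The only cosmetic difference is your claim that the triplet update costs $O(1)$ amortized rounds per accepted triangle (the paper's update is linear in the depth of the affected node), but since both arguments conclude that the update is dominated by the intersection tests, this does not change anything.
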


The reduction to $\frac{n+h-2}{2}$ guards does not seem easy. The previous methods required the computation of co-ordinates of intersection points of lines inside the polygon, which is not possible in this model.


\section{Conclusion and Future Works}

In this paper, we have studied the Cooperative Guards from a distributed mobile agent perspective. We studied two different models where agents had different sensing capabilities. In one model, which we call the {\it depth perception} model, the agents are able to entirely map out the polygon by estimating co-ordinates. The second model, which we call the {\it proximity perception} model, the agents have limited sensing capabilities. We compare the difficulties of solving the problem in the two models, providing efficient linear time algorithms in the stronger model and a polynomial ($O(n^4)$) time algorithm in the weaker model. In the stronger model, we provide multiple algorithms that run in linear number of rounds, that trade-off between the number of broadcasts performed, the amount of information sent in a single round and the number of guards placed.
We discuss about further improvements and some closely related open problems below.

\textbf{Open Problem 1. } {\em {How many co-operative guards are required for a polygon with $n$ vertices and $h$ holes?  What if guards are restricted to be on the vertices of the polygon? }} 
In Section \ref{warmup}, we gave a construction that required $\frac{n + h - 2}{2}$ co-operative guards, however we have not been able to construct polygons requiring so many guards. For $h \leq 1$, the problem has been solved by Zylinski \cite{zylinski2005cooperative}, but for $h \geq 2$, the problem remains open. We have not been able to provide a better upper bound than $\frac{n + 2h - 2}{2}$ guards when the guards are restricted to the vertices of the polygon. We have also not been able to construct polygons where the number of {\em vertex guards} required is more than $\frac{n + h - 2}{2}$, so we believe that there might exist a way to place $\frac{n + h - 2}{2}$ guards, by placing them only on the vertices of the polygon.

\textbf{Open Problem 2. } {\em {Is there a more efficient centralized algorithm to compute a cooperative guard set with $\frac{n + h - 2}{2}$ guards?}}
The centralized algorithm discussed in Section \ref{warmup} runs in $O(n^2)$ time, whereas we are able to construct a cooperative guard set using no more than $\frac{n + 2h - 2}{2}$ agents in $O(n \log n)$ time. Is it possible to bridge the gap in the running time? One possible approach might be to construct the reduction proposed by Sachs and Souvaine~\cite{BjorlingSachs1995} by using a divide and conquer algorithm on the holes.

\textbf{Open Problem 3.} {\em {Is it possible to construct $O(poly(D))$ time algorithms where $D$ is the hop diameter of the polygon, when agents have limited persistent memory? }}
The hop diameter of the polygon is the minimum number of straight line segments required to connect any two vertices of the polygon. In simple polygons, it is easy to construct such algorithms. However in polygons with holes, if the agents split up at any point for exploration, we could end up in a situation where two guards are symmetric with respect to each other. We suspect that randomization might help, though it is not clear to us, whether it is even possible when the agents are deterministic.



\bibliography{references}

\end{document}